\documentclass{article} 
\usepackage[margin=1.1in]{geometry}
\usepackage{authblk}

\usepackage[round]{natbib}
\usepackage[utf8]{inputenc} 
\usepackage[T1]{fontenc}    
\usepackage{times}
\usepackage{custom}
\usepackage{algorithm}

\title{Synthetic-Powered Multiple Testing with FDR Control}

\author[1]{Yonghoon Lee$^*$}
\author[2]{Meshi Bashari$^*$}
\author[1]{Edgar Dobriban}
\author[2,3]{Yaniv Romano}
\affil[1]{Department of Statistics and Data Science, The Wharton School, University of Pennsylvania, USA}
\affil[2]{Department of Electrical and Computer Engineering, Technion IIT, Israel}
\affil[3]{Department of Computer Science, Technion IIT, Israel}
\date{}

\begin{document}
\maketitle
\begingroup
\renewcommand\thefootnote{\fnsymbol{footnote}}
  \setcounter{footnote}{1}
  \footnotetext{Equal contribution.}
  \setcounter{footnote}{0}
\endgroup
\begin{abstract}
Multiple hypothesis testing with false discovery rate (FDR) control is a fundamental problem in statistical inference, with broad applications in genomics, drug screening, and outlier detection. In many such settings, researchers may have access not only to real experimental observations but also to auxiliary or synthetic data---from past, related experiments or generated by generative models---that can provide additional evidence about the hypotheses of interest.
We introduce \texttt{SynthBH}, a synthetic-powered multiple testing procedure that safely leverages such synthetic data.
We prove that \texttt{SynthBH} guarantees finite-sample, distribution-free FDR control under a mild PRDS-type positive dependence condition, without requiring the pooled-data p-values to be valid under the null.
The proposed method adapts to the (unknown) quality of the synthetic data: it enhances the sample efficiency and may boost the power when synthetic data are of high quality, while controlling the FDR at a user-specified level regardless of their quality.
We demonstrate the empirical performance of \texttt{SynthBH} on tabular outlier detection benchmarks and on genomic analyses of drug\mbox{--}cancer sensitivity associations, and further study its properties through controlled experiments on simulated data.
\end{abstract}

\section{Introduction}

Multiple hypothesis testing is a cornerstone of modern statistical inference.
In large-scale scientific studies---including genomics, drug discovery, high-throughput screening, and large-scale anomaly detection---researchers routinely test thousands to millions of hypotheses simultaneously.
In such regimes, controlling the \emph{false discovery rate} (FDR) \citep{benjamini1995controlling} has become a default target, as it offers a favorable balance between statistical validity and power compared to more stringent criteria such as family-wise error control.

A persistent bottleneck in these applications is that the amount of \emph{trusted} real data is often limited.
For example, in genomics, the number of reliably measured samples can be small relative to the dimensionality; in drug--cancer sensitivity studies, each additional experiment can be costly; and in outlier detection, obtaining a clean reference set of inliers can be expensive or require manual verification.
At the same time, practitioners increasingly have access to large amounts of \emph{auxiliary} data that are not fully trustworthy but are often informative: past experiments on related populations, weakly labeled or automatically curated datasets, and synthetic samples generated by modern generative models.
Such auxiliary datasets can be far larger than the real dataset, and when they are high quality, they have the potential to  dramatically sharpen statistical evidence.
However, because the auxiliary data distribution can differ from the real one in unknown ways, naively pooling real and synthetic data in classical testing pipelines can lead to spurious discoveries and inflated FDR.

This tension creates a basic methodological gap.
On the one hand, ignoring synthetic data can be overly conservative and low-powered in small-sample regimes.
On the other hand, treating synthetic data as if they were real can destroy the very error guarantees that make multiple testing scientifically reliable.
This paper asks a concrete question:
\emph{Can we leverage arbitrary synthetic/auxiliary data to improve the power of multiple testing, while guaranteeing finite-sample FDR control that is robust to unknown synthetic data quality?}

\section{Synthetic-Powered P-Values}
{\bf Setting.}
We consider a standard multiple hypothesis testing framework with $m$ hypotheses $H_1,\ldots,H_m$.
 Each of hypothesis refers to a null claim---for example, $H_j$ referring to ``the presence of the $j$-th genomic feature is not associated with the response to a given drug.'' 
For each hypothesis $H_j$, we assume access to (i) a \emph{valid} p-value $p_j$ computed from the trusted real data only,  and (ii) an additional p-value $\tilde p_j$ computed from the merged real-and-synthetic dataset.
The merged p-value $\tilde p_j$ can be substantially more informative when the synthetic data are of high quality (since it effectively uses a larger sample), but it is generally \emph{not} guaranteed to be valid under the null because the synthetic data distribution may be arbitrary.

Our contributions in this work begin with the formulation of a \emph{synthetic-powered p-value}.
Let $a\wedge b:=\min(a,b)$ and 
$a\vee b:=\max(a,b)$ denote the minimum and maximum of two numbers, respectively.
The \textit{synthetic-powered p-value at level $\delta\ge0$} is defined as
\begin{equation}\label{eqn:spp}
    \tilde{p}_j^\delta = p_j \wedge (\tilde{p}_j \vee (p_j - \delta)).
\end{equation}
This construction is deliberately careful in using synthetic data. Suppose we are handed high-quality synthetic data. If they provide very strong evidence, so that $\tilde{p}_j$ is smaller than $p_j$, we want to use it instead of the real data p-value $p_j$; and we take $ p_j \wedge \tilde{p}_j$ to achieve this. 
At the same time, since the quality of the synthetic data is unknown in general, we account for the possibility that they may be poor or misleading, and we cap the effect of the synthetic p-value at $p_j - \delta$. 

Generally, $\tilde{p}_j^\delta$ is not guaranteed to be a super-uniform variable---so not technically a classical p-value---but we use the term ``synthetic-powered p-value'' with perhaps a slight abuse of notation, because we will use it as an input to multiple testing procedures.\footnote{We show that the final procedure controls the false discovery rate; thus the terminology of $p$-values is still justified.}

{\bf From p-values to FDR control: \texttt{SynthBH}.}
The main technical contribution of this paper is to show how to turn the guarded p-values $\tilde p_j^\delta$, for appropriate choices of $\delta\ge 0$,
into a multiple testing procedure with provable FDR control.
We consider 
a user-specified \emph{admission cost} $\ep\ge 0$ for incorporating synthetic data: larger $\eps$ allows potentially larger power gains, 
but can also potentially increase false discoveries.

Our proposed method, \texttt{SynthBH} (\Cref{alg:spbh}), is a Benjamini--Hochberg type step-up procedure that uses \emph{rank-adaptive} guardrails. 
When considering $k$ candidate rejections, it only allows each p-value to be reduced by at most $k\eps/m$. 
This adaptive calibration is crucial: it allows the smallest p-values---those most likely to be non-nulls---to benefit from synthetic data, while preventing synthetic data from causing an uncontrolled proliferation of false discoveries.
Under a natural positive dependence condition (a mild extension of classical conditions), 
we prove that \texttt{SynthBH} controls the FDR at a level no larger than $\alpha+\eps$ in finite samples, \emph{without imposing any validity assumptions on the synthetic p-values themselves}.
When synthetic data are informative, \texttt{SynthBH} can yield substantially more discoveries than applying BH to the real data alone, with error $\alpha$;
when synthetic data are uninformative or misleading, the guardrails ensure the error remains controlled at the user-specified $\alpha+\eps$ level in the worst-case.

{\bf Why FDR control is different from prior results on synthetic-powered inference.}
Earlier work introduced synthetic-powered inference principles for general (monotone) loss control and hypothesis testing (the \gespi\ framework; \citealp{bashari2025statistical}) and for conformal prediction (synthetic-powered predictive inference; \citealp{bashari2025synthetic}).
A recurring theme there is that one can \emph{guardrail} the influence of synthetic data by intersecting/AND-ing it with a slightly more permissive real-data procedure at level $\alpha+\eps$, ensuring that the worst-case error is inflated by at most $\eps$.
In fact, this perspective inspired our definition of synthetic-powered p-values.
However, extending this idea to FDR is highly nontrivial:
FDR is a \emph{ratio} that depends on the full data-dependent rejection set and does not correspond to a monotone loss---i.e., enlarging the rejection set does not necessarily decrease the false discovery proportion. As a result, existing synthetic-powered inference theory does not directly apply.
Thus, new methodology and completely new analysis are needed to obtain FDR control 
while still leveraging synthetic evidence to improve power.

\subsection{Key Contributions}

This paper makes the following contributions.

{\bf A synthetic-powered BH procedure with finite-sample FDR control.}
We introduce \texttt{SynthBH} (\Cref{alg:spbh}), a multiple testing procedure that takes as input both real-data p-values $(p_j)$ and pooled real and synthetic p-values $(\tilde p_j)$.
The method uses the guarded synthetic-powered p-values \eqref{eqn:spp} with a rank-dependent guardrail $\delta=k\eps/m$ inside a BH-style step-up rule \eqref{eqn:rej_num}.

{\bf Distribution-free robustness to synthetic data quality.}
We prove that \texttt{SynthBH} controls the FDR in finite samples under an extended PRDS-type dependence condition (Definition~\ref{def:prds_ext}).
Crucially, our guarantee does \emph{not} require $\tilde p_j$ to be valid under the null; the synthetic data may be arbitrarily biased or shifted.
The role of $\eps$ is to quantify the permissible additional tolerance for incorporating synthetic evidence.

{\bf Weighted extension.}
Many applications assign heterogeneous importance to hypotheses (e.g., using prior biological knowledge, study design considerations, or multi-stage pipelines).
We therefore propose a weighted analogue (\Cref{alg:spbh_weighted}) and show an FDR bound that cleanly separates the base level $\alpha$ from the synthetic admission term $\eps$ (Theorem~\ref{thm:spbh}).

{\bf A concrete application with automatic verification of the dependence condition.}
We instantiate \texttt{SynthBH} for conformal outlier detection, where $p_j$ and $\tilde p_j$ are conformal p-values computed from a clean inlier reference set and from a larger merged set (possibly constructed from contaminated data via trimming).
In this setting, we verify the required positive dependence property and obtain distribution-free FDR control (Theorem~\ref{thm:outlier_prds}), extending conformal outlier detection to synthetic-powered regimes.

{\bf Empirical evaluation.}
We evaluate \texttt{SynthBH} on tabular outlier detection benchmarks and on genomic analyses of drug--cancer sensitivity associations, and complement these studies with controlled simulations.
Across these settings, \texttt{SynthBH} achieves empirical FDR close to the target level while improving power when synthetic data are informative, and it degrades gracefully when synthetic data quality deteriorates.
\footnote{Code to reproduce the experiments in this paper
is available at \href{https://github.com/Meshiba/synth-bh}{https://github.com/Meshiba/synth-bh}.}

\section{Related Work}

{\bf Multiple testing and FDR control.}
False discovery rate control was introduced by \citet{benjamini1995controlling}, and the Benjamini--Hochberg (BH) step-up procedure remains the most widely used approach due to its simplicity and strong guarantees.
Beyond independence, classical validity results rely on positive dependence structures such as PRDS \citep{benjamini2001control} and related formulations \citep{finner2009false}.
Our analysis builds on this tradition: we propose an extended PRDS condition (Definition~\ref{def:prds_ext}) tailored to settings with two collections of p-values $(p_j)$ and $(\tilde p_j)$, and we use it to establish finite-sample FDR control for \texttt{SynthBH}.

{\bf Using auxiliary information in multiple testing.}
There is a large literature on improving multiple testing power by incorporating side information, prior studies, or covariates---for example through p-value weighting and related optimization frameworks \citep{spjotvoll1972optimality,benjamini1997multiple,roeder2009genome,dobriban2015optimal,basu2018weighted}.
Such approaches can yield substantial gains when the auxiliary information is informative, but typically require assumptions that ensure validity of the resulting procedure (e.g., independence between covariates and null p-values, or explicit modeling assumptions linking current and prior studies).
Our setting differs in a key way: the auxiliary object $\tilde p_j$ is produced by pooling with synthetic data whose distribution can be arbitrary and unknown, so treating $\tilde p_j$ as a classical p-value (or as a covariate satisfying independence conditions) is generally unjustified.

{\bf Synthetic data for statistically valid inference.}
This paper is part of a broader effort to develop inference methods that can safely leverage synthetic or weakly trusted data.
Synthetic-powered predictive inference \citep{bashari2025synthetic} develops a score-transport mechanism for conformal prediction that yields finite-sample coverage guarantees while benefiting from synthetic data when score distributions align.
More recently, the \gespi\ framework \citep{bashari2025statistical} formalizes synthetic-powered inference as a general wrapper for controlling losses and error rates, highlighting the role of guardrails at level $\alpha+\eps$.
The present work complements 
these ideas by developing methods for 
multiple testing with FDR control.
Unlike one-shot hypothesis testing or loss control---where AND/OR-style guardrails can often be analyzed directly---FDR control couples all decisions through the random rejection threshold, making it essential to develop new procedures that account for this global dependence.

{\bf Conformal p-values and outlier detection with FDR control.}
Conformal prediction \citep{vovk1999machine,vovk2005algorithmic} provides distribution-free, finite-sample valid p-values under exchangeability.
\citet{bates2023testing} leverage conformal p-values together with BH to obtain distribution-free FDR control for detecting outliers among test points.
Our conformal outlier detection application 
is inspired by this work, but 
addresses a different bottleneck: conformal p-values computed from a small clean reference set can be conservative, whereas larger auxiliary datasets are often available but contaminated.
By combining a valid real-data conformal p-value with a pooled-data conformal p-value through \texttt{SynthBH}, we retain distribution-free FDR control while improving power when the auxiliary data are informative.

{\bf Why we focus on BH-style baselines in experiments.}
Because our goal is \emph{distribution-free} FDR control in the presence of potentially arbitrary synthetic data, many alternative approaches that rely on the pooled p-values being valid (or approximately valid) do not constitute meaningful competitors: their guarantees can fail precisely in the regimes we target.
Accordingly, our comparisons focus on canonical baselines that are always well-defined: BH on real-data p-values, BH at an inflated level $\alpha+\eps$ (illustrating the effect of relaxing the target), 
and BH on pooled-data p-values (illustrating the risks of naively trusting synthetic data).
\texttt{SynthBH} sits between these extremes, providing a principled, safe way to capture synthetic-data gains when available.

\section{Synthetic-Powered Multiple Hypothesis Testing}

\subsection{Notations}
We write $\R$ to denote the real line. For a positive integer $d$, let $[d]=\{1,2,\ldots,d\}$. For $\mathbf{a}=(a_1,\ldots,a_d)$ and $\mathbf{b}=(b_1,\ldots,b_d)$ in $\R^d$, we write $\mathbf{a}\preceq \mathbf{b}$ if and only if $a_j\le b_j$ for all $j\in[d]$. This defines the usual coordinatewise partial order on $\R^d$.

\subsection{Review of Benjamini-Hochberg Procedure}

The Benjamini-Hochberg procedure constructs the rejection set as follows. Given $m$ $p$-values $p_1, \ldots, p_m$ and a predefined level $\alpha \in (0,1)$, define
\begin{equation*}
    k^* = \max\left\{k \in [m] : \frac{m \cdot {p}_{(k)}}{k} \leq \alpha\right\},
\end{equation*}
where $p_{(k)}$ denotes the $k$-th smallest $p$-value. The rejection set is then $\{i\in [m] : p_i \leq p_{(k^*)}\}$.

This procedure provably controls the FDR at $(m_0/m) \alpha$, where $m_0$ is the number of true null hypotheses. The guarantee holds under independence of the $p$-values, or under the weaker assumption of positive regression dependence on the true nulls (PRDS), which we formally define below.

Recall that a set $A \subset \R^{m}$ is called \emph{ increasing} if it is upward closed under the partial order defined by the relation $\preceq$, i.e., if, for all $x\in A$, it follows that we have $y\in A$ for all $x\preceq y$.
 We then have the following notion of positive regression dependence, 
 which has been instrumental in establishing the validity of the Benjamini-Hochberg procedure under dependence \cite{benjamini1995controlling,benjamini2001control}. 

\begin{definition}[\citet{benjamini2001control}]\label{def:prds}
For a set $I \subset \{1, \ldots, m\}$ and a random vector $X = (X_1, X_2, \ldots, X_m)$, $X$ is \textit{positively regression dependent (PRDS) on $I$} if, for all $k \in I$, the conditional probability $x \mapsto \PPst{X \in A}{X_k = x}$ is nondecreasing on its domain as a function of $x \in \R$ for any increasing set $A \subset \R^{m}$.
\end{definition}

There is also a somewhat weaker form of PRDS, 
that we will build on in our analysis:

\begin{definition}\label{def:prds_weak}
In the setting of Definition \ref{def:prds}, 
$X$ is \emph{weakly positively regression dependent on $I$} if, for all $k\in I$ and for any increasing set $A\subset\R^m$, 
$x \mapsto \PPst{X \in A}{X_k \le x}$ 
is nondecreasing on its domain.
\end{definition}

\subsection{The \texttt{SynthBH} Procedure}

Now we introduce our procedure. For any $\delta \ge 0$, define $\tilde{p}_j^\delta$ as in~\eqref{eqn:spp}, 
and let $\tilde{p}_{(k)}^\delta$ denote the $k$-th smallest element among $\tilde{p}_1^\delta, \ldots, \tilde{p}_m^\delta$. Then for predetermined levels $\alpha \in (0,1)$ and $\eps \in (0,1)$, define
\begin{equation}\label{eqn:rej_num}
    k^* = \max\left\{k \in [m] : \frac{m \cdot \tilde{p}_{(k)}^{k\eps/m}}{k} \leq \alpha\right\}.
\end{equation}
We set $k^*=0$ if no index satisfies the condition.
Observe that we compare the $k$-th order statistics of the synthetic powered p-values, adjusted in an adaptive way by $k\eps/m$.
 For the smaller p-values, 
 it turns out to be reasonable to adjust them by smaller values. 
We reject the hypotheses corresponding to the smallest values of $(\tilde{p}^{k^*\eps/m}_{j})_{j\in[m]}$, i.e., the set $\{j\in[m]: \tilde p_j^{k^*\eps/m}\le \tilde p_{(k^*)}^{k^*\eps/m}\}$. This yields at least $k^*$ rejections and may yield more if ties occur at the cutoff.
See \Cref{alg:spbh} for the outline of the procedure.

\begin{algorithm}
\caption{\texttt{SynthBH}: Synthetic-powered multiple testing with FDR control}
\label{alg:spbh}
{\bf Input:} Real data $\Dn= \{Z_1,Z_2,\ldots,Z_n\}$. Synthetic data $\tDn = (\tilde{Z}_1, \tilde{Z}_2, \ldots, \tilde{Z}_N)$. 
P-value-generating algorithms $(\Alg_j)_{j \in [m]}$ such that $p_j=\Alg_j(\Dn)$ is a valid p-value under $H_j$.
Levels $\alpha,\eps \in (0,1)$.

{\bf Step 1:} Compute $p_j = \Alg_j(\Dn)$ and $\tilde{p}_j = \Alg_j(\Dn \cup \tDn)$, for $j\in [m]$.

{\bf Step 2:} Set $k^* = 0$ and $\mathcal{R} = \emptyset$. Then for $k = 1,2,\ldots,m$:

{\bf Step 2-1:} Compute $\tilde{p}_j^{k\eps/m} = p_j \wedge (\tilde{p}_j \vee (p_j - k\eps/m))$, for $j \in [m]$.

{\bf Step 2-2:} Find the $k$-th smallest element $\tilde{p}_{(k)}^{k\eps/m}$ among $(\tilde{p}_j^{k\eps/m})_{j \in [m]}$.

{\bf Step 2-3:} If $\tilde{p}_{(k)}^{k\eps/m} \leq \alpha k /m$, then update $k^* \leftarrow k$, $\mathcal{R} \leftarrow \{j \in [m] : \tilde{p}_j^{k\eps/m} \leq \tilde{p}_{(k)}^{k\eps/m}\}$.

{\bf Return:} Rejection set $\mathcal{R}$.
\end{algorithm}

Further, there are important settings where the p-values have predefined levels of importance, and we would like to apply different effective $\alpha$ levels for the different hypotheses. 
One way to capture this---leading to weighted hypothesis testing---is to assign weights $(w_j)_{j \in [m]}$ satisfying $w_j \geq 0$, $j \in [m]$,
to the different hypotheses;
 also satisfying the normalization conditions $\sum_{j=1}^m w_j = m$, 
 see e.g., \citet{spjotvoll1972optimality,benjamini1997multiple,roeder2009genome,dobriban2015optimal,basu2018weighted}, etc.
 
 Inspired by this line of work, we develop a method for FDR control
 with weighted hypotheses, enabling the use of synthetic data (\Cref{alg:spbh_weighted}).
 Here, the weights modulate the synthetic ``admission budget'' (how much each hypothesis may be boosted by synthetic data) rather than the base BH critical values; when $\eps=0$, the method reduces to the unweighted BH procedure at level $\alpha$.
 The intuition for this method is similar to the one for the previous case; with the difference that the 
 $j$-th p-value is shifted downwards by a term proportional to $w_j$, 
 hence effectively allowing us to compare it to a larger effective $\alpha$ when the weight $w_j$ is larger. 
 This procedure reduces to the previous one when all weights $w_j$ are equal to unity.
 However, our theoretical result for the general case is somewhat different from the one for the uniform case. 
 For clarity, we believe it is helpful to state the two methods separately.

\begin{algorithm}
\caption{Weighted \texttt{SynthBH}: Synth.-powered multiple testing with FDR control and heterogeneous weights}
\label{alg:spbh_weighted}
{\bf Input:} Real data $\Dn= \{Z_1,Z_2,\ldots,Z_n\}$. Synthetic data $\tDn = (\tilde{Z}_1, \tilde{Z}_2, \ldots, \tilde{Z}_N)$. Valid p-value-generating algorithms $(\Alg_j)_{j \in [m]}$, Levels $\alpha,\eps \in (0,1)$. Weights $(w_j)_{j \in [m]}$ satisfying $w_j \geq 0\, \forall j \in [m]$ and $\sum_{j=1}^m w_j = m$.

{\bf Step 1:} Compute $p_j = \Alg_j(\Dn)$ and $\tilde{p}_j = \Alg_j(\Dn \cup \tDn)$, for $j\in [m]$.

{\bf Step 2:} Set $k^* = 0$ and $\mathcal{R} = \emptyset$. Then for $k = 1,2,\ldots,m$:

{\bf Step 2-1:} Compute $\tilde{p}_j^{k w_j\eps/m} = p_j \wedge (\tilde{p}_j \vee (p_j - k w_j \eps/m))$, for $j \in [m]$.

{\bf Step 2-2:} Find the $k$-th smallest element $\tilde{p}_{(k)}^{k,\eps,w}$ among $(\tilde{p}_j^{k w_j\eps/m})_{j \in [m]}$.

{\bf Step 2-3:} If $\tilde{p}_{(k)}^{k,\eps,w} \leq \alpha k /m$, then update $k^* \leftarrow k$, $\mathcal{R} \leftarrow \{j \in [m] : \tilde{p}_j^{k w_j \eps/m} \leq \tilde{p}_{(k)}^{k,\eps,w}\}$.

{\bf Return:} Rejection set $\mathcal{R}$.
\end{algorithm}

\subsection{Computational Complexity}

We now turn to the computational aspects of \texttt{SynthBH}.
Although the rank-adaptive formulation may seem computationally intensive at first, \Cref{app-sec:complexity} shows that it can be exactly reduced to a single run of the classical BH procedure on a carefully constructed set of modified p-values (see \Cref{app-sec:complexity} for details).  
Thus, \texttt{SynthBH} can be implemented with the same $\mathcal{O}(m\log m)$ time and $\mathcal{O}(m)$ memory complexity as BH, while still leveraging synthetic data when beneficial and guaranteeing valid FDR control, as shown in the next section.

\subsection{Theoretical Guarantee}

We introduce the following condition, which is central to our theoretical guarantee.
This extends the PRDS condition from~\citet{finner2009false}, by allowing the conditioning to be done on a different vector $Y$ of random variables.

\begin{definition}\label{def:prds_ext}
Fix $m\in\mathbb{N}$ and let $I\subseteq[m]$.
Let $X\in\R^{d}$ and $Y=(Y_1,\ldots,Y_m)\in\R^m$ be random vectors.
We say that $X$ is \emph{positively regression dependent on $I$ with respect to $Y$} if, for all $k\in I$ and for any increasing set $A\subset\R^{d}$, the function
$x \mapsto \PPst{X\in A}{Y_k\le x}$
is nondecreasing on its domain.
\end{definition}

Equipped with these preliminaries, 
we now state our main result:
if the set of real and synthetic p-values are jointly PRDS with respect to the set of null real p-values, then our
algorithm controls the false discovery rate.

\begin{theorem}[\texttt{SynthBH} controls the FDR]\label{thm:spbh}
Suppose $(p_1,\ldots,p_m)$ are valid p-values for $H_1,\ldots,H_m$, respectively. If $(p_1,p_2,\ldots,p_m,\tilde{p}_1, \ldots, \tilde{p}_m)$ are positively regression dependent on the set of nulls $I_0 = \{j \in [m] : H_j \text{ is true}\}$ with respect to $(p_1,\ldots,p_m)$ in the sense of Definition~\ref{def:prds_ext}, then 
\texttt{SynthBH}
(\Cref{alg:spbh}) controls the false discovery rate at level $(m_0/m)\cdot(\alpha+\eps)$, where $m_0 = |I_0|$ denotes the number of true nulls. Furthermore, 
weighted \texttt{SynthBH} (\Cref{alg:spbh_weighted})
controls the false discovery rate at level $(m_0/m)\alpha + \frac{\eps}{m}\sum_{j\in I_0}w_j \le (m_0/m)\alpha+\eps$.
\end{theorem}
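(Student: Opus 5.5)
The plan is the classical PRDS argument for BH (in the style of \citet{benjamini2001control} and \citet{finner2009false}). The one new ingredient is that the guardrail converts a rejection based on $\tilde p_j^{\delta}$ into a threshold event for the \emph{valid} real p-value $p_j$, with a slightly inflated, rank-proportional threshold. I will do the weighted case, since \Cref{alg:spbh} is the special case $w_j\equiv 1$. Write $\delta_j(k)=kw_j\eps/m$ and $q_j(k)=\tilde p_j^{\delta_j(k)}$.

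\textbf{Step 1: structural facts.}
\begin{itemize}
\item \emph{Monotonicity in $\delta$.} For fixed inputs, $q_j(k)$ is nonincreasing in $\delta$, hence nonincreasing in $k$.
\item \emph{Monotonicity in the inputs.} $q_j(k)$ is coordinatewise nondecreasing in $(p_j,\tilde p_j)$, being a composition of $\min$ and $\max$ of nondecreasing maps.
\item \emph{Guardrail.} $q_j(k)\ge p_j-\delta_j(k)$ always holds.
\item \emph{$|\mathcal R|=k^*$, including under ties.} Suppose $|\mathcal R|=k'>k^*$. Then at least $k'$ indices satisfy $q_j(k')\le q_j(k^*)\le q_{(k^*)}(k^*)\le \alpha k^*/m\le \alpha k'/m$. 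So the condition in Step 2-3 holds at $k'$, contradicting the maximality of $k^*$.
\end{itemize}
Consequently, $k^*=k^*(p,\tilde p)$ is a coordinatewise nonincreasing function of $(p_1,\dots,p_m,\tilde p_1,\dots,\tilde p_m)$. Each set $\{k^*\le k-1\}$ is therefore an increasing set in $\R^{2m}$.

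\textbf{Step 2: reduction to $p_j$.} Fix a null $j\in I_0$. If $j\in\mathcal R$ and $k^*=k$, then $q_j(k)\le \alpha k/m$. By the guardrail this gives $p_j\le c_{j,k}$, where $c_{j,k}:=k(\alpha+w_j\eps)/m$. Hence
\[
\EE{\frac{\mathbf 1\{j\in\mathcal R\}}{|\mathcal R|\vee 1}}\le \sum_{k=1}^m \frac1k\,\PP{p_j\le c_{j,k},\,k^*=k}.
\]
Note that the synthetic $\tilde p_j$ has disappeared from the event involving hypothesis $j$. It only enters through $k^*$, which is controlled by monotonicity.

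\textbf{Step 3: the PRDS telescoping.} This is the step that needs care, because the conditioning is on $p_j$ while the increasing set lives in the joint $(p,\tilde p)$ space. That is exactly what Definition~\ref{def:prds_ext} provides, with $X=(p,\tilde p)$ and $Y=p$. Write $\{k^*=k\}=\{k^*\le k\}\setminus\{k^*\le k-1\}$ and set $D_k=\{k^*\le k\}$, which is increasing by Step 1. Then the sum in Step 2 equals
\[
\sum_k \frac{\PP{p_j\le c_{j,k}}}{k}\Big[\PPst{D_k}{p_j\le c_{j,k}}-\PPst{D_{k-1}}{p_j\le c_{j,k}}\Big].
\]
\begin{itemize}
\item \emph{Validity of $p_j$.} Since $c_{j,k}$ is linear in $k$, we have $\PP{p_j\le c_{j,k}}/k\le (\alpha+w_j\eps)/m$. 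If $c_{j,k}>1$ the probability is $1$ and the bound is still valid.
\item \emph{Extended PRDS.} Since $c_{j,k-1}\le c_{j,k}$, we get $\PPst{D_{k-1}}{p_j\le c_{j,k}}\ge \PPst{D_{k-1}}{p_j\le c_{j,k-1}}$.
\item \emph{Telescoping.} Combining the two bounds, the standard Abel-summation argument gives that the sum is at most $\frac{\alpha+w_j\eps}{m}\sum_k[\PPst{D_k}{p_j\le c_{j,k}}-\PPst{D_{k-1}}{p_j\le c_{j,k-1}}]$. This telescopes to at most $(\alpha+w_j\eps)/m$, since $D_m$ is sure and $D_0=\{k^*\le 0\}$.
\end{itemize}
Some care is needed when $\PP{p_j\le c_{j,k}}=0$; those terms vanish. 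Care is also needed to use the ``domain'' qualifier in Definition~\ref{def:prds_ext}.

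\textbf{Step 4: summing over nulls.} Summing over $j\in I_0$ gives
\[
\mathrm{FDR}\le \frac{m_0}{m}\,\alpha+\frac{\eps}{m}\sum_{j\in I_0}w_j\le \frac{m_0}{m}\,\alpha+\eps,
\]
using $\sum_j w_j=m$. With $w_j\equiv1$ this is $(m_0/m)(\alpha+\eps)$.

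I expect the main obstacle to be Step 1's claim that $k^*$ is monotone and equals $|\mathcal R|$. The thresholds $\delta_j(k)$ vary with $k$, so one must check that tie-breaking and the rank-adaptive guardrail do not destroy the monotonicity needed for $\{k^*\le k-1\}$ to be an increasing set.
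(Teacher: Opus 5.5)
Your proposal is correct and follows essentially the same route as the paper. Both arguments use the same four ingredients:
\begin{itemize}
\item the guardrail reduction of $\{\tilde p_j^{\delta}\le \alpha k/m\}$ to $\{p_j\le k(\alpha+w_j\eps)/m\}$;
\item super-uniformity of the null $p_j$;
\item coordinatewise monotonicity of $k^*$, which makes $\{k^*\le r\}$ an increasing set;
\item the extended-PRDS telescoping.
\end{itemize}
The differences are cosmetic. You treat the weighted case directly, and you prove $|\mathcal R|=k^*$ exactly, which is a correct sharpening of the paper's $|\mathcal R|\ge k^*$ (the latter is all the argument needs).
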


Such PRDS conditions are widely used in the literature on FDR control.
In the next section, we provide a concrete example in which this condition holds.
Moreover, \Cref{thm:spbh} shows that \texttt{SynthBH} controls the FDR at a level
adaptive to the true number of nulls, with slightly 
sharper control in the uniform case than in the weighted 
case.

\section{Application: Conformal Outlier Detection}\label{sec:app-cod}

Suppose we have data $\Dn = (X_1, \ldots, X_n) \iidsim P_X$. Given test points $(X_{n+1}, \ldots, X_{n+m})$, we aim to detect outliers by testing the following hypotheses:
\begin{equation}\label{eqn:null_outlier}
    H_j : X_{n+j} \sim P_X, \quad j = 1, 2, \ldots, m.
\end{equation}
\citet{bates2023testing} introduces a method based on conformal p-values that provides provable finite-sample, distribution-free FDR control.
The conformal p-value  for $H_j$ \cite{vovk1999machine,vovk2005algorithmic} is given by
\begin{equation}\label{eqn:conf_p}
    p_j = \frac{\sum_{i=1}^n \One{s(X_i) \geq s(X_{n+j})} + 1}{n + 1},
\end{equation}
where $s$ is a score function, which may be fixed or constructed using a separate split of the data. 
For example, $s(x)$ can be a score obtained by some outlier detection algorithm, such that a large value indicates that $x$ ``looks unusual'' for the machine learning model. 
The above $p_j$ is known to be super-uniform under $H_j$ \cite{vovk2005algorithmic}.
This p-value is always at least $1/(n+1)$, implying that when the sample size $n$ is small, the resulting detection procedure can be conservative.

Now, suppose we also have access to a large synthetic dataset
$\tDn = (\tilde{X}_1,\ldots,\tilde{X}_N)$. 
We can consider the following conformal p-value constructed from the merged data $\Dn \cup \tDn$:
\begin{equation}\label{eqn:conf_p_merged}
\tilde{p}_j = \frac{\sum_{i=1}^n \One{s(X_i) \geq s(X_{n+j})} + \sum_{i=1}^N \One{s(\tilde{X}_i) \geq s(X_{n+j})} + 1}{n + N + 1}.
\end{equation}

The variable $\tilde{p}_j$ can be expected to be less conservative, as its smallest possible value is $1/(n+N+1)$, but it is not guaranteed to be super-uniform under $H_j$.

\begin{algorithm}
\caption{Synthetic-powered conformal outlier detection}
\label{alg:sp_outlier}
{\bf Input:} Real data $\Dn= \{X_1,X_2,\ldots,X_n\}$. Synthetic data $\tDn = (\tilde{X}_1, \tilde{X}_2, \ldots, \tilde{X}_N)$. Score function $s : \X \rightarrow \R$, Levels $\alpha,\eps \in (0,1)$.

Run the \texttt{SynthBH} procedure with $p_j$ and $\tilde{p}_j$ computed according to~\eqref{eqn:conf_p} and~\eqref{eqn:conf_p_merged}, respectively, for $j=1,2,\ldots,m$; as well as all other given input.
{\bf Return:} Set of detected outliers $\mathcal{R}$.
\end{algorithm}

Nonetheless, 
we propose a Benjamini--Hochberg-type outlier detection method (Algorithm \ref{alg:sp_outlier}) that controls the false discovery rate. This guarantee follows from our general FDR control results, as we show that the associated conformal p-values satisfy the required PRDS property.

In order to accomplish this, we require 
the following lemma, showing that the classic PRDS condition (Definition~\ref{def:prds}), which conditions on exact values, implies the weaker conditioning-on-$\le$ form (Definition~\ref{def:prds_weak}).
 
\begin{lemma}\label{lem:ext}
    For a fixed measurable set $A \subset \R^n$, a random vector $X \in \R^n$ and a random variable $W$, if the function $x \mapsto \PPst{X \in A}{W = x}$ is nondecreasing, then $x \mapsto \PPst{X \in A}{W \leq x}$ is also nondecreasing.
\end{lemma}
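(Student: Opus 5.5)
The plan is to write the conditional probability given $\{W\le x\}$ as an average of the conditional probabilities given $\{W=w\}$ over $w\le x$, and then show that moving $x$ upward adds only mass at larger $w$, where the integrand is at least as large. Let $f(w)=\PPst{X\in A}{W=w}$, a version of the regular conditional probability, which is nondecreasing on the support of $W$ by hypothesis. For any $x$ in the domain, i.e.\ with $\PP{W\le x}>0$, the tower property gives
\[
\PPst{X\in A}{W\le x}=\frac{\EE{f(W)\One{W\le x}}}{\PP{W\le x}}=\EEst{f(W)}{W\le x}.
\]

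Fix $x<x'$ with $\PP{W\le x}>0$. Condition on $\{W\le x'\}$ and split it into $\{W\le x\}$ and $\{x<W\le x'\}$. If $\PP{x<W\le x'}=0$, the two conditional probabilities coincide. Otherwise $\EEst{f(W)}{W\le x'}$ is a convex combination of $a:=\EEst{f(W)}{W\le x}$ and $b:=\EEst{f(W)}{x<W\le x'}$. Since $f$ is nondecreasing, $f(W)\le f(x)$ on $\{W\le x\}$ and $f(W)\ge f(x)$ on $\{W>x\}$, so $a\le f(x)\le b$. A convex combination of $a$ and $b$ is therefore at least $a$, which gives $\PPst{X\in A}{W\le x'}\ge \PPst{X\in A}{W\le x}$.

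The main obstacle is making sense of $f(x)$ when $x$ is not an atom or is outside the support of $W$, because monotonicity of $f$ holds only on its domain, essentially the support of $W$. To avoid evaluating $f$ at $x$ itself, I would use $s=\sup\{f(w): w\le x,\ w \text{ in the support}\}$. This gives $f(W)\le s$ almost surely on $\{W\le x\}$, and by monotonicity $f(W)\ge s$ almost surely on $\{W>x\}$, so the same sandwich $a\le s\le b$ holds. A cleaner alternative that avoids the sandwich altogether is a Chebyshev/Harris-type covariance inequality: for $g=\One{\cdot\le x}$ and $h=\One{\cdot\le x'}$,
\[
\EE{f(W)g(W)}\,\EE{h(W)}\ \le\ \EE{f(W)h(W)}\,\EE{g(W)}.
\]
This follows by writing the difference as $\frac12\EE{(f(W)-f(W'))(g(W)h(W')-g(W')h(W))}$ with $W'$ an independent copy and checking that the integrand is nonnegative pointwise. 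I would present the convex-combination argument with the supremum fix and mention the covariance form as a remark.
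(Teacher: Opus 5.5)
Your argument is correct and is essentially the paper's proof: both split $\{W\le x'\}$ into $\{W\le x\}$ and $\{x<W\le x'\}$, write the conditional probability as a convex combination of the conditional means of $f(W)$ on the two pieces, and use that the mean on the upper piece dominates the mean on the lower piece. The paper simply attributes that last step to monotonicity of $h$, whereas your sandwich $a\le s\le b$ (with the supremum fix and the null-event case) makes it rigorous.

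One small slip in the covariance remark: $\tfrac12\,\mathbb{E}[(f(W)-f(W'))(g(W)h(W')-g(W')h(W))]$ equals the left side minus the right side, and its integrand is nonpositive rather than nonnegative, so flip the sign there.
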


Equipped with this, we can show the following result.

\begin{theorem}[Outlier detection via \texttt{SynthBH} with FDR control]\label{thm:outlier_prds}
Consider the hypotheses $(H_j)_{j \in [m]}$ defined in~\eqref{eqn:null_outlier}. Let $p_j$ and $\tilde{p}_j$ be as defined in~\eqref{eqn:conf_p} and~\eqref{eqn:conf_p_merged}, respectively, for $j \in [m]$. 
Suppose $(X_1,\ldots,X_n)$ are i.i.d.\ from $P_X$, the test points $X_{n+1},\ldots,X_{n+m}$ are mutually independent and independent of $(X_1,\ldots,X_n)$, 
and the synthetic 
data $(\tilde X_1,\ldots,\tilde X_N)$ is independent of $(X_1,\ldots,X_{n+m})$.
Let the score function $s$ be fixed 
and assume that the scores $(s(X_i))_{i\in[n+m]}$ and $(s(\tilde X_i))_{i\in[N]}$ are almost surely all distinct.
Then the random vector $(p_1, p_2, \ldots, p_m, \tilde{p}_1, \ldots, \tilde{p}_m)$ is positively regression dependent on the set of true nulls $I_0 = \{ j \in [m] : H_j \text{ is true} \}$ with respect to $(p_1, \ldots, p_m)$, in the sense of Definition~\ref{def:prds_ext}. Consequently, 
\Cref{alg:sp_outlier} controls the $\FDR$ at level $(m_0/m)\cdot(\alpha+\eps)$.

\end{theorem}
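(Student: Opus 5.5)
The plan is to verify the extended PRDS property of Definition~\ref{def:prds_ext} directly, and then read off FDR control from Theorem~\ref{thm:spbh}. Fix a null index $k\in I_0$ and an increasing set $A\subset\R^{2m}$. The argument proceeds by conditioning on the exact value of $p_k$, showing monotonicity in that value, and then passing to conditioning on $\{p_k\le x\}$ via Lemma~\ref{lem:ext}.

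\textbf{Step 1: a monotone representation.} Write $S_i=s(X_i)$ for $i\in[n+m]$ and $\tilde S_i=s(\tilde X_i)$. Since scores are a.s.\ distinct and, under $H_k$, $S_{n+k}$ is exchangeable with $S_1,\dots,S_n$, the rank of $S_{n+k}$ among $(S_1,\dots,S_n,S_{n+k})$ is uniform, so $p_k$ is uniform on $\{1/(n+1),\dots,1\}$. Let $Z$ be the unordered multiset $\{S_1,\dots,S_n,S_{n+k}\}$. Conditioning on $p_k=r/(n+1)$ means that $S_{n+k}$ is the $r$-th largest element of $Z$. By exchangeability, $Z$ is independent of this rank, and the conditional law of $Z$ given $p_k=x$ does not depend on $x$. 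The remaining inputs, namely $(S_{n+j})_{j\neq k}$ and $(\tilde S_i)_{i\in[N]}$, are independent of $(Z,\text{rank})$ by assumption. Given these inputs, consider the effect of moving $r$ to $r+1$, i.e.\ increasing $p_k$. The test point then becomes the next smaller element of $Z$, and the previous occupant of that position becomes a calibration point.

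\textbf{Step 2: coordinatewise monotonicity.} With $Z$, $(S_{n+j})_{j\ne k}$ and $\tilde S$ fixed, I show that every coordinate of $(p_1,\dots,p_m,\tilde p_1,\dots,\tilde p_m)$ is nondecreasing in $r$.
\begin{itemize}
\item The coordinates $p_k$ and $\tilde p_k$ are nondecreasing because $S_{n+k}$ decreases: in $\tilde p_k$, the calibration count of real scores is exactly $r-1$, and the synthetic count $\sum_i\One{\tilde S_i\ge S_{n+k}}$ is nondecreasing as $S_{n+k}$ decreases.
\item For $j\neq k$, the calibration set $Z\setminus\{S_{n+k}\}$ changes by swapping one element $z_{(r+1)}$ out and a larger element $z_{(r)}$ in. Each indicator $\One{z\ge S_{n+j}}$ therefore can only increase, so $p_j$ and $\tilde p_j$ are nondecreasing.
\end{itemize}
Since $A$ is increasing, $\One{(p,\tilde p)\in A}$ is nondecreasing in $r$ pointwise. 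Integrating over the law of $(Z,(S_{n+j})_{j\ne k},\tilde S)$, which does not depend on $r$, shows that $x\mapsto\PPst{(p,\tilde p)\in A}{p_k=x}$ is nondecreasing.

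\textbf{Step 3: conclusion.} Lemma~\ref{lem:ext} upgrades this to monotonicity of $x\mapsto\PPst{(p,\tilde p)\in A}{p_k\le x}$, which is exactly Definition~\ref{def:prds_ext} with $Y=(p_1,\dots,p_m)$. Validity of each $p_j$ under $H_j$ is the standard conformal super-uniformity result. Theorem~\ref{thm:spbh} then gives FDR control at level $(m_0/m)(\alpha+\eps)$.

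The main obstacle is Step 1: justifying rigorously that conditioning on the rank of the null test score leaves the law of the multiset $Z$ and all other inputs unchanged. This is where exchangeability under $H_k$ (i.i.d.\ calibration data, together with independence of the other test points and of the synthetic data) and the a.s.\ distinctness of scores are used. Everything after that is a deterministic swap argument.
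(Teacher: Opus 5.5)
Your proposal is correct and follows essentially the same route as the paper. In both, you condition on the rank of the null test score among the calibration-plus-test scores and use that this rank is independent of the order statistics, the other test scores, and the synthetic scores; a swap argument then shows every coordinate of $(p_1,\ldots,p_m,\tilde p_1,\ldots,\tilde p_m)$ is nondecreasing in the rank, and Lemma~\ref{lem:ext} passes from $\{p_k=x\}$ to $\{p_k\le x\}$ before Theorem~\ref{thm:spbh} is invoked. The differences are cosmetic: you compare consecutive ranks $r$ and $r+1$, whereas the paper builds rank-$r$ hypothetical p-value vectors from the order statistics $\bar S_{(1)}>\cdots>\bar S_{(n+1)}$ and compares arbitrary $r_1<r_2$.
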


\begin{remark}
    The assumption that the scores are almost surely distinct can always be satisfied by adding a negligible random noise to the scores.
\end{remark}

\FloatBarrier

\section{Experiments}\label{sec:exp}

We demonstrate the performance of our method on conformal outlier detection (\Cref{sec:exp-od}) and on a genomic analysis of drug-cancer associations (\Cref{sec:exp-gdsc}). We defer results from controlled experiments on simulated data to \Cref{app-sec:exp-simulated}, where we further study the performance of \texttt{SynthBH} under varying settings.
We emphasize that the finite-sample worst-case guarantee for \texttt{SynthBH} is $\FDR \le \alpha+\eps$ (more precisely, $\FDR \le (m_0/m)(\alpha+\eps)$), while in many benign regimes we empirically observe FDR closer to the base level $\alpha$.

{\bf Methods} In each example, we compare the following methods:
\begin{compactitem}
\item \texttt{BH~(real)}: The classic BH procedure applied on p-values computed from the limited real data.
\item \texttt{BH~(real$+\eps$)}: Same as \texttt{BH~(real)}, but applies the BH procedure at a relaxed level $\alpha+\eps$.
\item \texttt{BH~(synth)}: The classic BH procedure applied on p-values computed from the pooled real and synthetic data.
This approach benefits from a larger sample size but does not provide any FDR control guarantees.
\item \texttt{SynthBH}: The proposed method, which leverages both real and synthetic data and enjoys FDR control guarantees.
\end{compactitem}

\subsection{Conformal Outlier Detection}\label{sec:exp-od}

In this section, we evaluate the performance of 
our method 
on the conformal outlier detection task described in \Cref{sec:app-cod}. 
Given real data $\Dn = (X_1, \ldots, X_n) \iidsim P_X$ and $m$ test points $(X_{n+1}, \dots, X_{n+m})$, the goal is to test, for each test point $X_{n+j}$, whether it is drawn from the inlier distribution $P_X$ or from a different, outlier distribution $Q_X\neq P_X$.

Conformal outlier detection provides finite-sample, distribution-free FDR control when the conformal p-values (as defined in~\eqref{eqn:conf_p}) are computed solely using inlier data. However, as discussed in \Cref{sec:app-cod}, this procedure can be overly conservative when the number of inliers, $n$, is small. In practice, 
obtaining a large collection of labeled inliers may be challenging.
In contrast, obtaining a large, potentially \textit{contaminated dataset} (i.e., containing a small fraction of outliers) is often readily available.
While such a set does not consist solely of inliers, 
we are able to leverage it as synthetic data.

As argued in \citet{bashari2025robust}, using a contaminated set directly often results in conservative inference due to the presence of outliers. To mitigate this, we adopt a cheap, annotation-free approach, where we trim the top $\rho\%$ samples (for $\rho$ defined below) 
of $\tDn$ that are 
marked by an outlier detection model, and treat the remaining samples as synthetic data. As illustrated in \citet{bashari2025robust}, this trimmed set cannot be treated as if it contains solely inliers, but provides a practical heuristic for leveraging large contaminated sets (which enjoys theoretical guarantees as per our results).

{\bf Datasets and experimental setup.}
We evaluate performance
on three tabular benchmark datasets for outlier detection: \textit{Shuttle} \citep{shuttle}, \textit{KDDCup99} \citep{KDDCup99}, and \textit{Credit-card} \citep{creditcard}. The target FDR level is $\alpha=10\%$, and $\eps=10\%$. All additional details are provided in \Cref{app-sec:datasets}.

\Cref{fig:cod} presents the performance of the conformal outlier detection methods on the three tabular datasets. All datasets show a similar trend. 
\texttt{BH~(synth)} leads to arbitrary---uncontrolled---FDR levels that depend on the unknown quality of the pooled synthetic and real data, highlighting the need for safe use of synthetic data. \texttt{BH~(real)} and \texttt{BH~(real$+\eps$)} suffer from the limited size of the real dataset: \texttt{BH~(real)} conservatively controls the FDR at level $\alpha$ and achieves lower power, while \texttt{BH~(real$+\eps$)} controls the FDR at a higher level but exhibits high variability.
In contrast, \texttt{SynthBH} achieves empirical FDR close to the nominal level $\alpha$ with higher power. When synthetic data are of high quality---e.g., as in the \textit{Shuttle} dataset---\texttt{SynthBH} achieves higher power than \texttt{BH~(real)} and lower variability in error than \texttt{BH~(real$+\eps$)}.

\begin{figure}[!h]
    \centering
    \includegraphics[width=0.36\linewidth, valign=t]{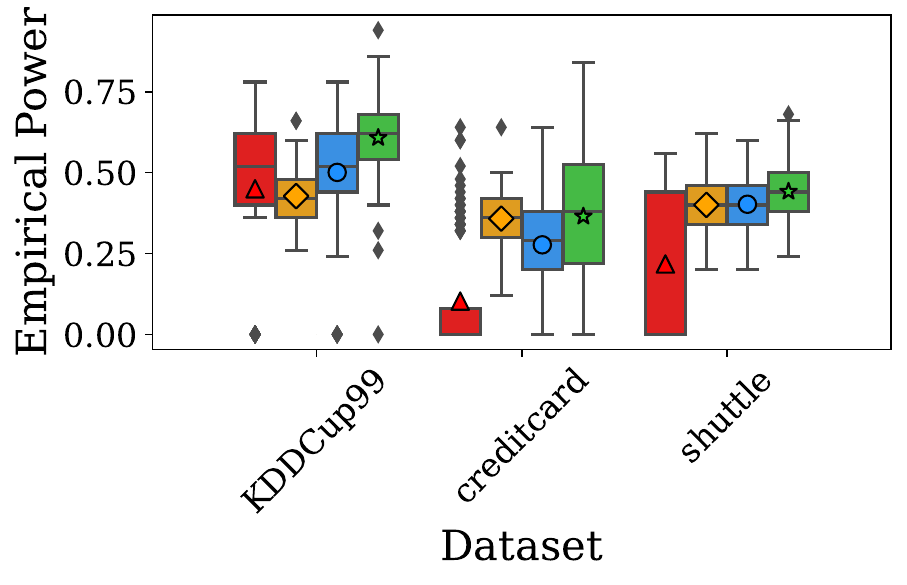}
    \includegraphics[width=0.36\linewidth, valign=t]{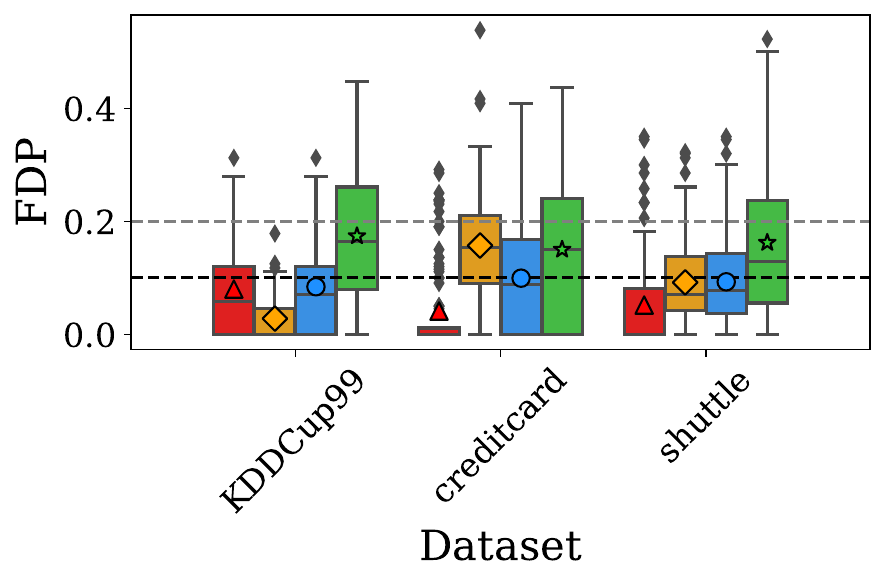}
    \includegraphics[width=0.15\linewidth, valign=t]{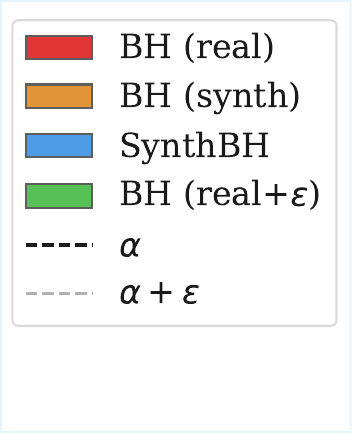}
    \caption{Comparison of conformal outlier detection methods on three tabular datasets. The target FDR level is $\alpha=10\%$ and $\eps=10\%$. Top: detection rate (empirical power). Bottom: false discovery proportion.}
    \label{fig:cod}
\end{figure}

We include additional results in \Cref{app-sec:od-exp}, showing performance as a function of the real data sample size and as a function of the trimming proportion, with the latter effectively illustrating performance under varying synthetic data quality.

To further illustrate that safely leveraging high-quality synthetic data can not only achieve FDR close to the nominal level but also improve power, \Cref{fig:od-fdr-power} presents empirical power as a function of the empirical FDR. 
\begin{figure*}[!h]
    \centering
    \begin{subfigure}[t]{0.28\textwidth}
    \includegraphics[width=\linewidth, valign=t]{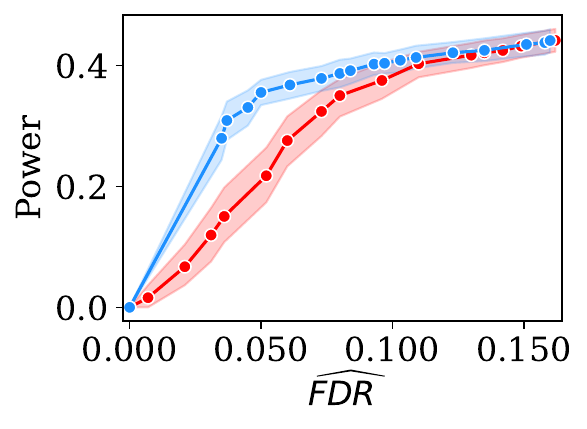}
    \includegraphics[width=\linewidth, valign=t]{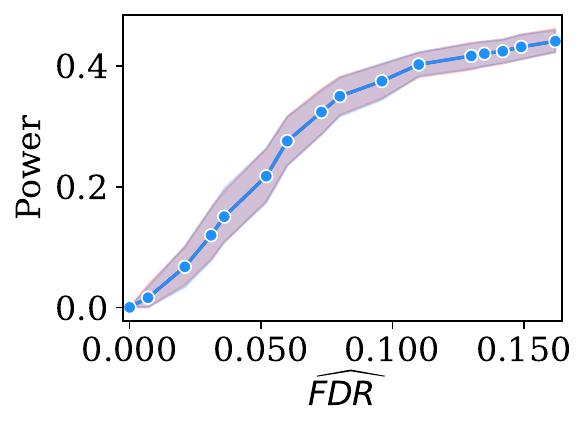}
    \caption{Shuttle}
    \end{subfigure}
    \begin{subfigure}[t]{0.28\textwidth}
    \includegraphics[width=\linewidth, valign=t]{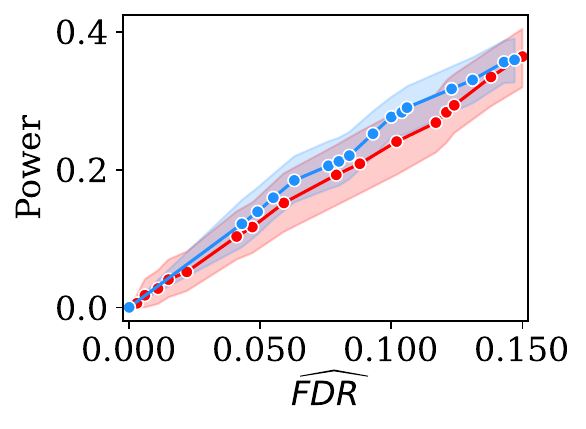}
    \includegraphics[width=\linewidth, valign=t]{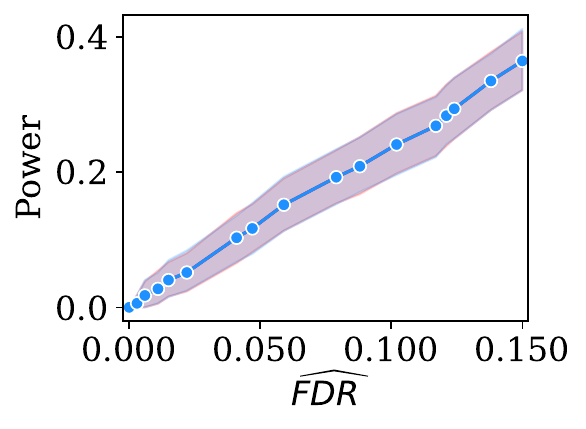}
    \caption{Credit-card}
    \end{subfigure}
    \begin{subfigure}[t]{0.28\textwidth}
    \includegraphics[width=\linewidth, valign=t]{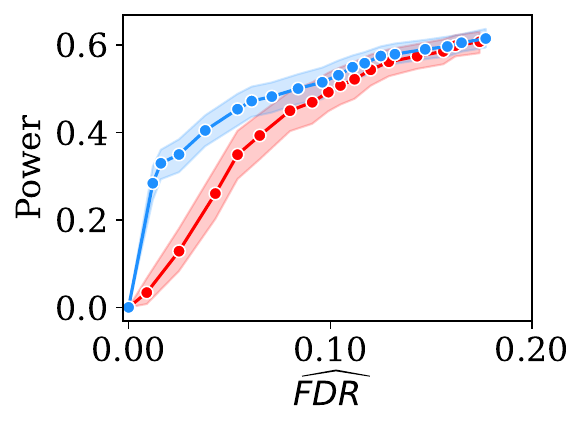}
    \includegraphics[width=\linewidth, valign=t]{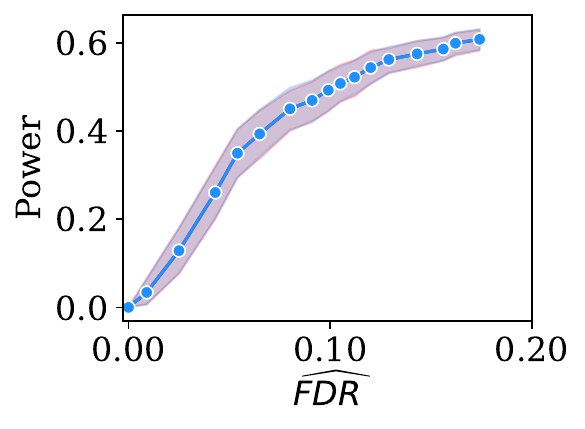}
    \caption{KDDCup99}
    \end{subfigure}
    \includegraphics[width=0.13\linewidth, valign=t]{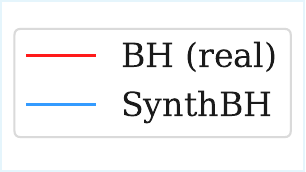}
    \caption{Power as a function of the empirical FDR for \texttt{BH~(real)} and our proposed method \texttt{SynthBH} across different outlier detection datasets: (a) Shuttle, (b) Credit-card, and (c) KDDCup99. The first row corresponds to a trimming proportion of 2\%, and the second row corresponds to a trimming proportion of 0\%, representing a scenario where synthetic data are not useful. Here, $\alpha\in [0,20\%]$.}
    \label{fig:od-fdr-power}
\end{figure*}
We note that this experiment is not feasible in practice, as it requires knowing the false discoveries, but it is included to compare our proposed method with using only the real data when both methods attain the same empirical FDR.

Following that figure, the first row shows performance when the trimming proportion is $\rho=2\%$, representing a scenario in which the synthetic data are informative. In this setting, \texttt{SynthBH} improves power relative to using only real data while also exhibiting lower variance. This can be explained as follows: by design, the effective p-values used by our method are always less than or equal to those computed from only the real data. When the synthetic data are of high quality, the larger sample size can lead to smaller p-values for non-null hypotheses. This can results in a larger rejection set with more true discoveries while maintaining the same false discovery proportion.

The second row of that figure presents 
a scenario in which the synthetic data are not informative, with trimming proportion set to $\rho = 0\%$. Here, the synthetic data provides no additional signal, and \texttt{SynthBH} matches the power of the method that utilize only the real data.
Additional results with a lower $\eps$ value are provided in \Cref{app-sec:od-exp}, showing the same trends.

\subsection{Genomics of Drug Sensitivity in Cancer}\label{sec:exp-gdsc}

We now demonstrate the performance of \texttt{SynthBH} on a genomics task: testing associations between genomic features and sensitivity to anticancer drugs. 
Identifying such associations helps characterize the molecular features that influence drug response in cancer cells and can ultimately enable the design of improved cancer therapies.

\textbf{Data and experimental setting}. We use data from the Genomics of Drug Sensitivity in Cancer (GDSC) \citep{yang2012genomics}, which contains drug response measurements (IC50 values) across a large collection of cancer cell lines, along with genomic annotations such as mutations.

\textbf{Hypotheses}. For each cancer type, genomic feature, and drug, we test whether drug response---measured by IC50 values across cell lines---is associated with the presence of the feature. Formally, the null hypothesis is
\begin{equation*}
    \mathcal{H}_{0}^{(c,f,d)}: \mathbb{E}_{f\text{ present}}{(\text{IC}50)} = \mathbb{E}_{f\text{ absent}}{(\text{IC}50)}.
\end{equation*}
To test this hypothesis, we partition the cell lines of cancer type $c$ into two groups according to the presence or absence of feature $f$, and apply a two-sample $t$-test to obtain a $p$-value for each $(c,f,d)$ triple. 
This follows the general scheme of the GDSC~\citep{yang2012genomics} paper and the  GDSCTools python package~\citep {cokelaer2018gdsctools}; since we consider tissue-specific hypotheses, we use a two-sample $t$-test instead of ANOVA---used in these works.
We consider 100 hypotheses of interest for breast invasive carcinoma and lung adenocarcinoma, listed in~\Cref{app-sec:gdsc-details}.

\textbf{Pan-cancer as synthetic data}. 
The number of cell lines per cancer type can be limited, while pan-cancer data across multiple tissues provides abundant auxiliary information.
Pan-cancer associations are informative when a genomic feature influences drug response through shared biological mechanisms across cancers, but uninformative or misleading when drug sensitivity is cancer-type specific. We use data from all other tissues as synthetic data.

Since ground-truth labels for each hypothesis are unavailable and the number of real cell lines is limited, we adopt the following scheme to compute an empirical proxy. 
Specifically, for each hypothesis, we compute a \emph{ground-truth score}, defined as the proportion of runs in which the hypothesis is rejected by the BH procedure at level~$\alpha$. A higher ground-truth score indicates stronger evidence that the hypothesis is non-null. To estimate this score, we repeat the following procedure 10 times: randomly sample $80\%$ of the cell lines for each target cancer tissue, and test the 100 hypotheses of interest.

For comparing the different methods, we randomly sample 50\% of both real (cancer-type-specific) and synthetic (pan-cancer) cell lines to compute the real and pooled p-values, and then apply each method with $\alpha=10\%$ and $\eps=10\%$. We report the number of rejections and the average ground-truth score of the rejection set, computed over 10 trials.

\Cref{fig:gdsc} shows the performance of multiple testing methods on the GDSC genomic data. On the left, \texttt{SynthBH} yields more rejections than \texttt{BH~(real)}. On the right, the average ground-truth score of the rejection sets obtained by \texttt{SynthBH} is high, and notably larger than that of \texttt{BH~(real$+\eps$)}. This indicates that the additional rejections of \texttt{SynthBH} likely correspond to meaningful true discoveries rather than false positives. Interestingly, \texttt{BH~(synth)} alone produces fewer rejections with lower ground-truth scores, though it still provides useful information for some hypotheses, as evidenced by the performance of \texttt{SynthBH}. We provide additional results with $\eps=5\%$ in \Cref{app-sec:gdsc-exp}, which present a trend similar to that reported here.

\begin{figure}[!h]
    \centering
    \includegraphics[height=0.19\linewidth]{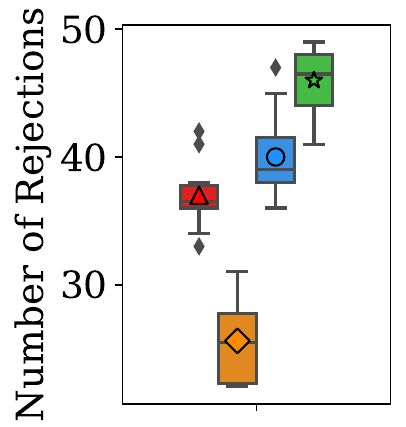}
    \includegraphics[height=0.19\linewidth]{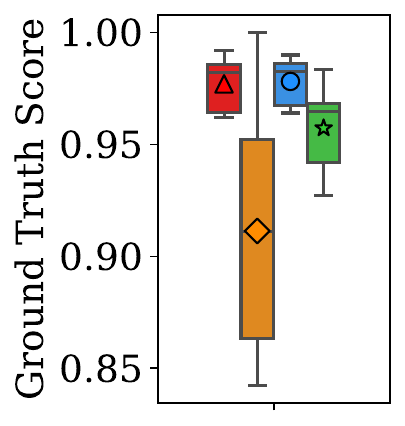}
    \includegraphics[height=0.19\linewidth]{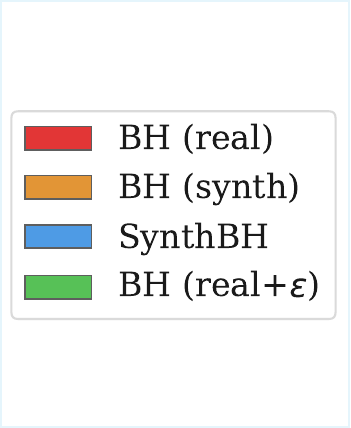}
    \caption{Comparison of multiple testing methods on GDSC genomic data for 100 tissue-feature-drug hypotheses. The target FDR level is $\alpha=10\%$ and $\eps=10\%$. Left: number of rejections. Right: average ground-truth score of the rejection set.}
    \label{fig:gdsc}
\end{figure}

\section{Discussion}

This work introduces \texttt{SynthBH}, a synthetic-powered multiple testing procedure that safely leverages auxiliary/synthetic data while guaranteeing finite-sample FDR control under a mild PRDS-type dependence condition, without requiring the pooled p-values to be valid.
Important future directions include deriving guarantees under weaker (or arbitrary) dependence, developing broadly verifiable sufficient conditions beyond the conformal outlier setting studied here, and designing principled ways to choose $\eps$ (or adapt it) and to integrate data-dependent synthetic filtering/weighting while preserving finite-sample validity.

\section*{Acknowledgments}
E.~D. and Y.~L. were partially supported by the US NSF, NIH, ARO, AFOSR, ONR, and the Sloan Foundation.
M.~B. and Y.~R. were supported by the European Union (ERC, SafetyBounds, 101163414). Views and opinions expressed are however those of the authors only and do not necessarily reflect those of the European Union or the European Research Council Executive Agency. Neither the European Union nor the granting authority can be held responsible for them. This research was also partially supported by the Israel Science Foundation (ISF grant 729/21). 
Y.~R. acknowledges additional support from the Career Advancement Fellowship at the Technion.

\FloatBarrier


\begin{thebibliography}{23}
\providecommand{\natexlab}[1]{#1}
\providecommand{\url}[1]{\texttt{#1}}
\expandafter\ifx\csname urlstyle\endcsname\relax
  \providecommand{\doi}[1]{doi: #1}\else
  \providecommand{\doi}{doi: \begingroup \urlstyle{rm}\Url}\fi

\bibitem[Bashari et~al.(2025{\natexlab{a}})Bashari, Lee, Lotan, Dobriban, and Romano]{bashari2025statistical}
M.~Bashari, Y.~Lee, R.~M. Lotan, E.~Dobriban, and Y.~Romano.
\newblock Statistical inference leveraging synthetic data with distribution-free guarantees.
\newblock \emph{arXiv preprint arXiv:2509.20345}, 2025{\natexlab{a}}.

\bibitem[Bashari et~al.(2025{\natexlab{b}})Bashari, Lotan, Lee, Dobriban, and Romano]{bashari2025synthetic}
M.~Bashari, R.~M. Lotan, Y.~Lee, E.~Dobriban, and Y.~Romano.
\newblock Synthetic-powered predictive inference.
\newblock \emph{Advances in Neural Information Processing Systems}, 2025{\natexlab{b}}.

\bibitem[Bashari et~al.(2025{\natexlab{c}})Bashari, Sesia, and Romano]{bashari2025robust}
M.~Bashari, M.~Sesia, and Y.~Romano.
\newblock Robust conformal outlier detection under contaminated reference data.
\newblock In \emph{Forty-second International Conference on Machine Learning}, 2025{\natexlab{c}}.

\bibitem[Basu et~al.(2018)Basu, Cai, Das, and Sun]{basu2018weighted}
P.~Basu, T.~T. Cai, K.~Das, and W.~Sun.
\newblock Weighted false discovery rate control in large-scale multiple testing.
\newblock \emph{Journal of the American Statistical Association}, 2018.

\bibitem[Bates et~al.(2023)Bates, Cand{\`e}s, Lei, Romano, and Sesia]{bates2023testing}
S.~Bates, E.~Cand{\`e}s, L.~Lei, Y.~Romano, and M.~Sesia.
\newblock Testing for outliers with conformal p-values.
\newblock \emph{The Annals of Statistics}, 2023.

\bibitem[Benjamini and Hochberg(1995)]{benjamini1995controlling}
Y.~Benjamini and Y.~Hochberg.
\newblock Controlling the false discovery rate: a practical and powerful approach to multiple testing.
\newblock \emph{Journal of the Royal statistical society: series B (Methodological)}, 1995.

\bibitem[Benjamini and Hochberg(1997)]{benjamini1997multiple}
Y.~Benjamini and Y.~Hochberg.
\newblock Multiple hypotheses testing with weights.
\newblock \emph{Scandinavian Journal of Statistics}, 1997.

\bibitem[Benjamini and Yekutieli(2001)]{benjamini2001control}
Y.~Benjamini and D.~Yekutieli.
\newblock The control of the false discovery rate in multiple testing under dependency.
\newblock \emph{The Annals of Statistics}, 2001.

\bibitem[Bischl et~al.(2021)Bischl, Casalicchio, Feurer, Gijsbers, Hutter, Lang, Mantovani, van Rijn, and Vanschoren]{OpenML2021}
B.~Bischl, G.~Casalicchio, M.~Feurer, P.~Gijsbers, F.~Hutter, M.~Lang, R.~G. Mantovani, J.~N. van Rijn, and J.~Vanschoren.
\newblock Open{ML}: A benchmarking layer on top of {OpenML} to quickly create, download, and share systematic benchmarks.
\newblock \emph{Advances in Neural Information Processing Systems}, 2021.

\bibitem[Catlett(1992)]{shuttle}
J.~Catlett.
\newblock {Statlog (Shuttle)}.
\newblock UCI Machine Learning Repository, 1992.

\bibitem[Cokelaer et~al.(2018)Cokelaer, Chen, Iorio, Menden, Lightfoot, Saez-Rodriguez, and Garnett]{cokelaer2018gdsctools}
T.~Cokelaer, E.~Chen, F.~Iorio, M.~P. Menden, H.~Lightfoot, J.~Saez-Rodriguez, and M.~J. Garnett.
\newblock {GDSCTools} for mining pharmacogenomic interactions in cancer.
\newblock \emph{Bioinformatics}, 2018.

\bibitem[Dobriban et~al.(2015)Dobriban, Fortney, Kim, and Owen]{dobriban2015optimal}
E.~Dobriban, K.~Fortney, S.~K. Kim, and A.~B. Owen.
\newblock Optimal multiple testing under a {G}aussian prior on the effect sizes.
\newblock \emph{Biometrika}, 2015.

\bibitem[Feurer et~al.(2021)Feurer, van Rijn, Kadra, Gijsbers, Mallik, Ravi, Mueller, Vanschoren, and Hutter]{OpenML2021p}
M.~Feurer, J.~N. van Rijn, A.~Kadra, P.~Gijsbers, N.~Mallik, S.~Ravi, A.~Mueller, J.~Vanschoren, and F.~Hutter.
\newblock Open{ML}-{P}ython: an extensible python api for openml.
\newblock \emph{JMLR}, 2021.

\bibitem[Finner et~al.(2009)Finner, Dickhaus, and Roters]{finner2009false}
H.~Finner, T.~Dickhaus, and M.~Roters.
\newblock On the false discovery rate and an asymptotically optimal rejection curve.
\newblock \emph{The Annals of Statistics}, 2009.

\bibitem[Group(2013)]{creditcard}
M.~L. Group.
\newblock {Credit Card Fraud Detection Data Set}, 2013.

\bibitem[Liu et~al.(2008)Liu, Ting, and Zhou]{liu2008isolation}
F.~T. Liu, K.~M. Ting, and Z.-H. Zhou.
\newblock Isolation forest.
\newblock In \emph{International Conference on Data Mining}. IEEE, 2008.

\bibitem[Roeder and Wasserman(2009)]{roeder2009genome}
K.~Roeder and L.~Wasserman.
\newblock Genome-wide significance levels and weighted hypothesis testing.
\newblock \emph{Statistical science: a review journal of the Institute of Mathematical Statistics}, 2009.

\bibitem[Spj{\o}tvoll(1972)]{spjotvoll1972optimality}
E.~Spj{\o}tvoll.
\newblock On the optimality of some multiple comparison procedures.
\newblock \emph{The Annals of Mathematical Statistics}, 1972.

\bibitem[Stolfo et~al.(1999)Stolfo, Fan, Lee, Prodromidis, and Chan]{KDDCup99}
S.~Stolfo, W.~Fan, W.~Lee, A.~Prodromidis, and P.~Chan.
\newblock {KDD Cup 1999 Data}.
\newblock UCI Machine Learning Repository, 1999.

\bibitem[Vovk et~al.(1999)Vovk, Gammerman, and Saunders]{vovk1999machine}
V.~Vovk, A.~Gammerman, and C.~Saunders.
\newblock Machine-learning applications of algorithmic randomness.
\newblock In \emph{{I}nternational {C}onference on {M}achine {L}earning}, 1999.

\bibitem[Vovk et~al.(2005)Vovk, Gammerman, and Shafer]{vovk2005algorithmic}
V.~Vovk, A.~Gammerman, and G.~Shafer.
\newblock \emph{Algorithmic learning in a random world}.
\newblock Springer Science \& Business Media, 2005.

\bibitem[Wang(2022)]{wang2022elementary}
R.~Wang.
\newblock Elementary proofs of several results on false discovery rate.
\newblock \emph{arXiv preprint arXiv:2201.09350}, 2022.

\bibitem[Yang et~al.(2012)Yang, Soares, Greninger, Edelman, Lightfoot, Forbes, Bindal, Beare, Smith, Thompson, et~al.]{yang2012genomics}
W.~Yang, J.~Soares, P.~Greninger, E.~J. Edelman, H.~Lightfoot, S.~Forbes, N.~Bindal, D.~Beare, J.~A. Smith, I.~R. Thompson, et~al.
\newblock Genomics of drug sensitivity in cancer ({GDSC}): a resource for therapeutic biomarker discovery in cancer cells.
\newblock \emph{Nucleic Acids Research}, 2012.

\end{thebibliography}

\clearpage

\appendix
\crefalias{subsection}{appendix}
\Crefname{appendix}{Appendix}{Appendices}
\crefname{appendix}{appendix}{appendices}
\crefalias{section}{appendix}
\Crefname{section}{Appendix}{Appendices}
\crefname{section}{appendix}{appendices}
\onecolumn

\renewcommand{\thefigure}{S\arabic{figure}}
\setcounter{figure}{0}

\renewcommand{\thetable}{S\arabic{table}}
\setcounter{table}{0}

\section{Additional Experiments}

\subsection{Experiments with Simulated Data}\label{app-sec:exp-simulated}

In this section, we present controlled experiments based on simulated data to further study the performance of \texttt{SynthBH} under varying settings. Specifically, we consider the following multiple hypothesis testing problem.

Consider $m$ null hypotheses $(\mathcal{H}_{0,i})_{i\in [m]}$ tested simultaneously against corresponding alternatives $(\mathcal{H}_{1,i})_{i\in [m]}$, where
\[
\mathcal{H}_{0,i}: q_i \leq 0.5 \text{ vs. } \mathcal{H}_{1,i}: q_i > 0.5.
\]

For each hypothesis $i$, we observe $n$ real samples drawn independently from $\text{Bernoulli}(q_i)$ and $N$ synthetic samples drawn independently from $\text{Bernoulli}(\tilde{q}_i)$. 
We then use a randomized binomial test to compute p-values, which are subsequently used by all methods.

Unless stated otherwise, 
we set $n=200$, $N=1000$, $\alpha=10\%$, and $\eps=10\%$. We test $m=1000$ hypotheses, of which 5\% follow the alternative ($q_i = 0.6$), and the rest follow the null. 
The synthetic samples for each hypothesis are generated from $\text{Bernoulli}(0.5)$ when $\mathcal{H}_{0,i}$ holds, and from $\text{Bernoulli}(\rho_{\text{synth}})$ with $\rho_{\text{synth}}=0.55$ otherwise.
This setup reflects a scenario in which the synthetic signal for non-null hypotheses is weaker than that of the real data, but may still be informative due to the larger synthetic sample size. We report the average detection rate (power) and the average false discovery proportion, computed over 100 independent trials.

\textbf{The effect of the real data sample size $n$}. \Cref{app-fig:simulated-n} presents the performance of different multiple hypothesis testing methods as a function of the real data sample size $n$. When 
$n$ is small (e.g., $n=50$), both \texttt{BH~(real)} and \texttt{BH~(real$+\eps$)} attain nearly zero power; as a consequence, \texttt{SynthBH} also attains zero power. As $n$ increases, the power of both \texttt{BH~(real)} and \texttt{BH~(real$+\eps$)} increases, with the former controlling the FDR at level $\alpha$ and the latter at the inflated level $\alpha+\eps$, and thus achieving higher power, as expected.

In this regime, the synthetic data are informative, and \texttt{SynthBH} attains power comparable to \texttt{BH~(real$+\eps$)}.
In this benign regime we empirically observe FDR close to the nominal level $\alpha$.
For sufficiently large $n$, all methods achieve power close to one.

\begin{figure}[!h]
    \centering
    \includegraphics[width=0.35\linewidth, valign=t]{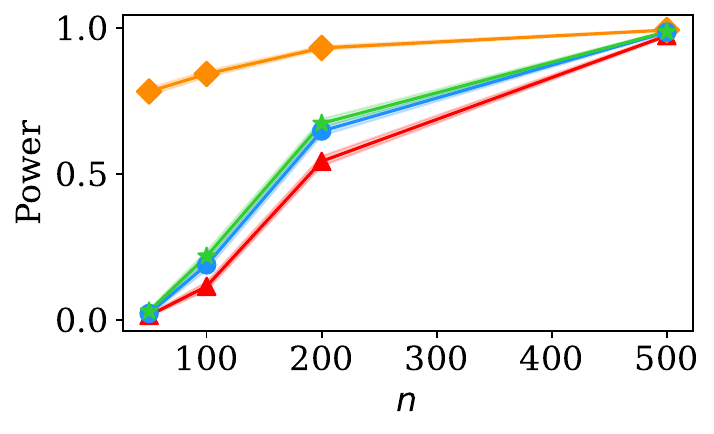}
    \includegraphics[width=0.35\linewidth, valign=t]{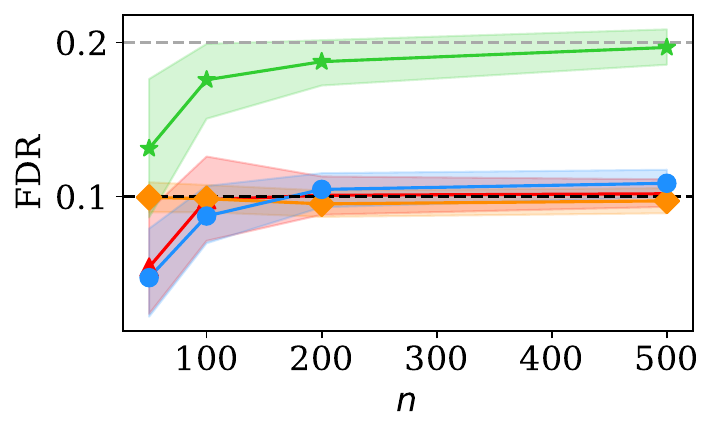}
    \includegraphics[width=0.15\textwidth, valign=t]{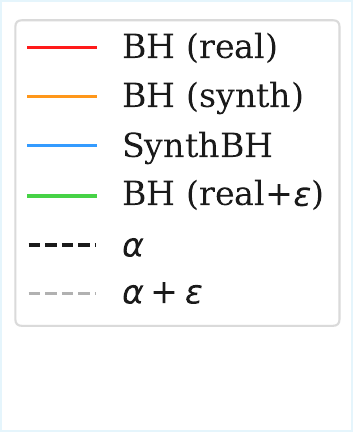}
    \caption{Comparison of multiple hypothesis testing as a function of the real data sample size $n$. Left:
average detection rate (power). Right: average false discovery proportion.}
    \label{app-fig:simulated-n}
\end{figure}
  
\textbf{The effect of the synthetic data quality}.
\Cref{app-fig:simulated-rho-synt} presents the performance as a function of the signal strength of the synthetic data. \Cref{app-fig:simulated-rho-synt-non-null} corresponds to a setting in which the synthetic data are informative for both null and non-null hypotheses, while \Cref{app-fig:simulated-rho-synt-non-and-null} illustrates an extreme scenario reflecting worst-case synthetic data (in terms of error), where synthetic data exhibit signal for all hypotheses, nulls and non-nulls; the magnitude of this signal is given by the x-axis.

We observe similar trends across both settings. 
In these simulations, \texttt{SynthBH} attains power at least that of \texttt{BH~(real)} and increases with the synthetic signal strength, with performance approaching that of \texttt{BH~(real$+\eps$)} as the synthetic signal strengthens.

In terms of FDR, in \Cref{app-fig:simulated-rho-synt-non-null}, where the synthetic data are of high quality, \texttt{SynthBH} achieves FDR close to the target level $\alpha$. In contrast, in \Cref{app-fig:simulated-rho-synt-non-and-null}, which reflects a worst-case error scenario for synthetic data, \texttt{BH~(synth)} exhibits extremely inflated FDR, while \texttt{SynthBH} controls the FDR at level $\alpha+\eps$, as guaranteed in \Cref{thm:spbh}.

\begin{figure}[!h]
    \centering
    \begin{subfigure}[t]{\textwidth}
    \hspace{3em}
    \includegraphics[width=0.35\linewidth, valign=t]{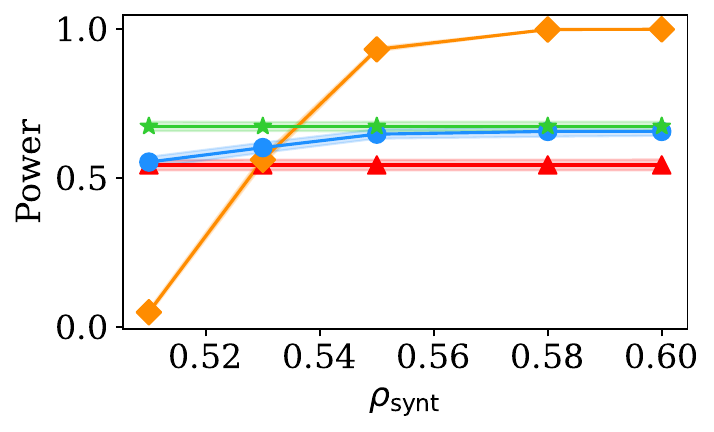}
    \includegraphics[width=0.35\linewidth, valign=t]{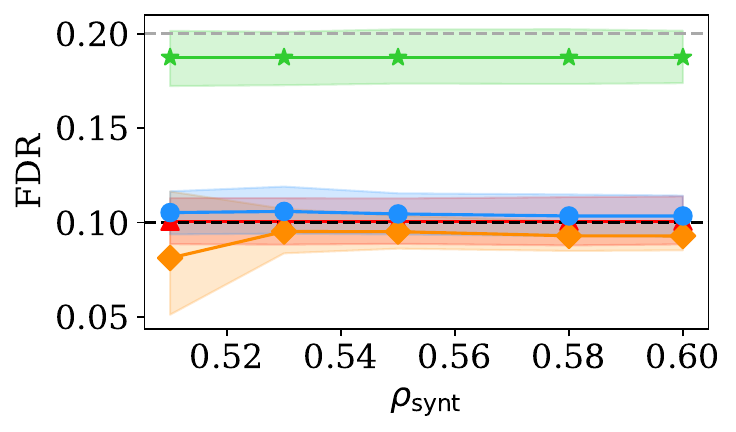}
    \caption{High-quality synthetic data: synthetic data follow the same null/non-null structure as the real data.}\label{app-fig:simulated-rho-synt-non-null}
    \end{subfigure}
    \begin{subfigure}[t]{\textwidth}
    \centering
    \includegraphics[width=0.35\linewidth, valign=t]{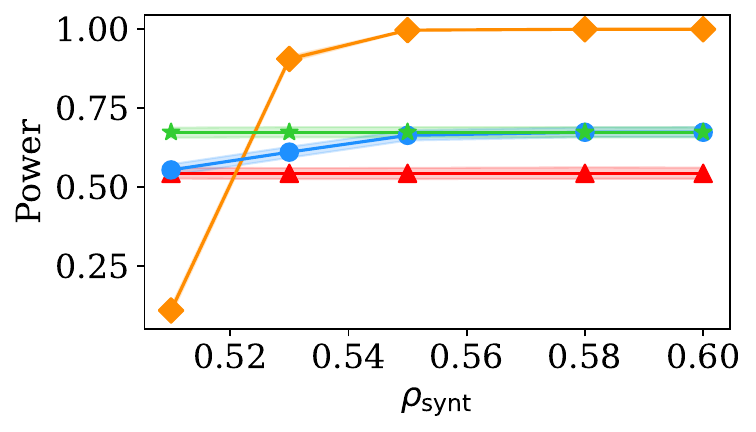}
    \includegraphics[width=0.35\linewidth, valign=t]{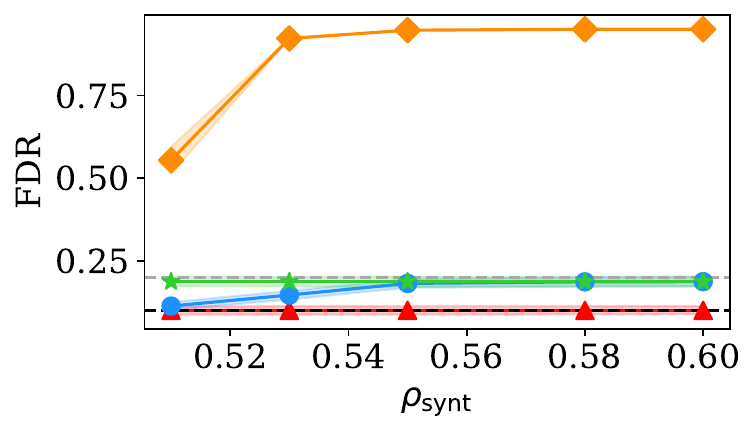}
    \includegraphics[width=0.15\textwidth, valign=t]{figures/experiments/simulated_data/legend.pdf}
    \caption{Worst-case synthetic data (in terms of error): synthetic data indicate non-null signal for all hypotheses, including nulls.}\label{app-fig:simulated-rho-synt-non-and-null}
    \end{subfigure}
    \caption{Comparison of multiple hypothesis testing as a function of the synthetic signal strength $\rho_{\text{synth}}$. Results are shown for (a) high-quality synthetic data and (b) worst-case synthetic data (in terms of error). Left: average detection rate (power). Right: average false discovery proportion.}
    \label{app-fig:simulated-rho-synt}
\end{figure}

\begin{figure}[!h]
    \centering
    \begin{subfigure}[t]{\textwidth}
    \hspace{3em}
    \includegraphics[width=0.35\linewidth, valign=t]{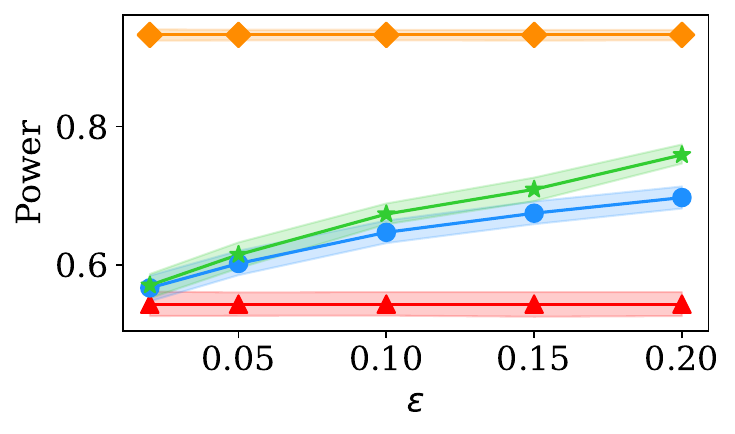}
    \includegraphics[width=0.35\linewidth, valign=t]{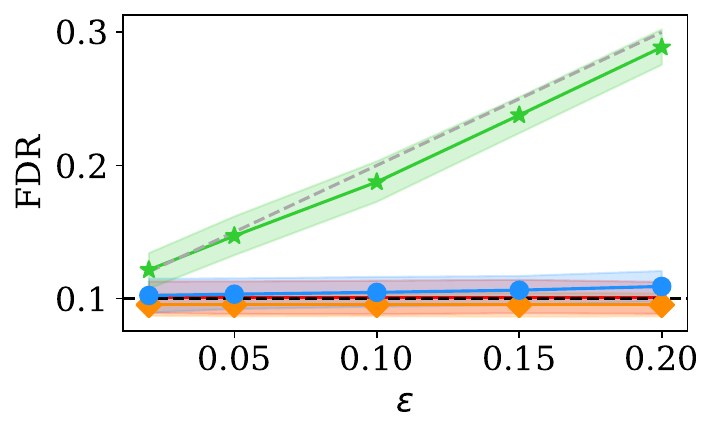}
    \caption{High-quality synthetic data: synthetic data follow the same null/non-null structure as the real data.}\label{app-fig:simulated-eps-non-null}
    \end{subfigure}
    \begin{subfigure}[t]{\textwidth}
    \centering
    \includegraphics[width=0.35\linewidth, valign=t]{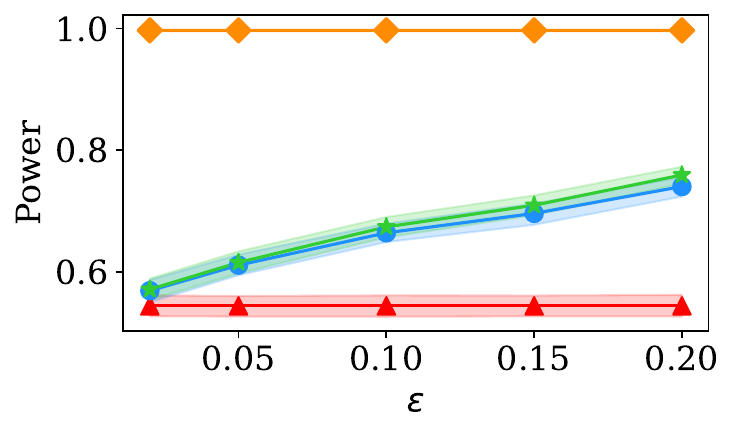}
    \includegraphics[width=0.35\linewidth, valign=t]{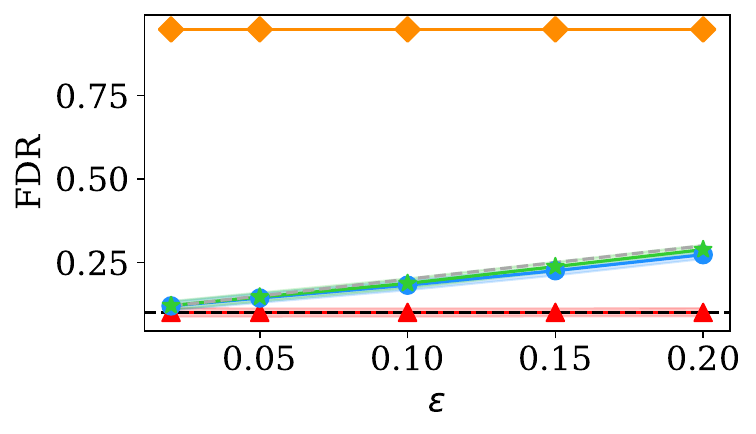}
    \includegraphics[width=0.15\textwidth, valign=t]{figures/experiments/simulated_data/legend.pdf}
    \caption{Worst-case synthetic data (in terms of error): synthetic data indicate non-null signal for all hypotheses, including nulls.}\label{app-fig:simulated-eps-non-and-null}
    \end{subfigure}
    \caption{Comparison of multiple hypothesis testing as a function of $\eps$. Results are shown for (a) high-quality synthetic data and (b) worst-case synthetic data (in terms of error). Left: average detection rate (power). Right: average false discovery proportion.}
    \label{app-fig:simulated-eps}
\end{figure}

\textbf{The effect of $\eps$}. As discussed throughout the paper, $\eps$ can be interpreted as the ``admission cost'' in terms of extra error the user is willing to accept to potentially benefit from synthetic data.
Therefore, in the worst-case scenario, \texttt{SynthBH} incurs an additional $\eps$ error, but at the same time its power increases with $\eps$ for all synthetic data qualities.

Similar to the previous analyses, we consider two scenarios: (a) high-quality synthetic data, and (b) worst-case synthetic data in terms of error. \Cref{app-fig:simulated-eps} shows performance as a function of the parameter $\eps$ for both scenarios. 
In \Cref{app-fig:simulated-eps-non-null}, \texttt{SynthBH} improves power as $\eps$ increases while maintaining FDR close to the nominal level $\alpha$ for high-quality synthetic data. In contrast, \Cref{app-fig:simulated-eps-non-and-null} illustrates the worst-case synthetic data scenario: the power of \texttt{SynthBH} also increases with $\eps$, but FDR is close to $\alpha+\eps$, remaining controlled. \texttt{BH~(synth)}, by comparison, exhibits extremely inflated and uncontrolled FDR.

\FloatBarrier

\subsection{Conformal Outlier Detection}\label{app-sec:od-exp}

We include additional results extending the experiments presented in \Cref{sec:exp-od}.

\Cref{app-fig:od-n} shows the performance of the conformal outlier detection methods on the \textit{Shuttle} dataset as a function of the real data sample size $n$. When $n$ is very small, both \texttt{BH~(real)} and \texttt{BH~(real$+\eps$)} achieve nearly zero power, and consequently \texttt{SynthBH} also has zero power. As $n$ increases, the power of all three methods improves. In this regime, \texttt{SynthBH} achieves higher power than \texttt{BH~(real)}, while maintaining FDR close to the nominal level $\alpha$, and exhibits lower variability than \texttt{BH~(real$+\eps$)}.

\begin{figure}[!h]
    \centering
    \includegraphics[width=0.35\linewidth, valign=c]{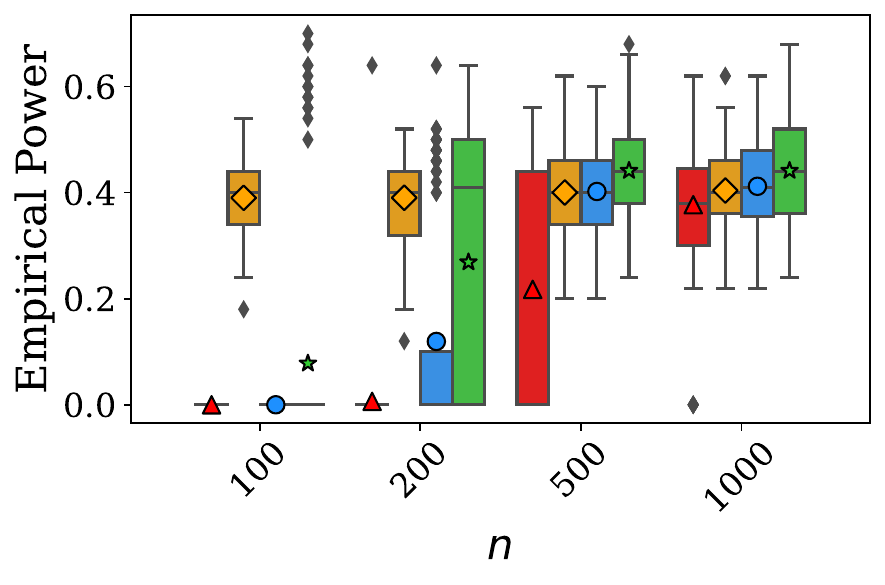}
    \includegraphics[width=0.35\linewidth, valign=c]{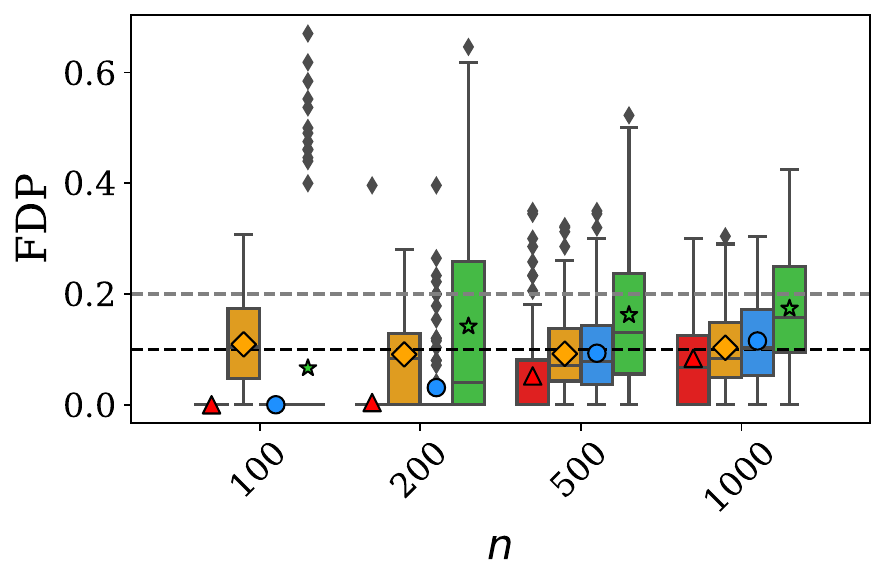}
    \includegraphics[width=0.15\linewidth, valign=c]{figures/experiments/new_outlier_detection/legend_box.pdf}
    \caption{Comparison of conformal outlier detection methods on the \textit{Shuttle} dataset as a function of the real data sample size $n$. Left: detection rate (empirical power). Right: false discovery proportion.}
    \label{app-fig:od-n}
\end{figure}

\Cref{app-fig:od-p_trim} shows the performance of the conformal outlier detection methods as a function of the trimming proportion $\rho$, which effectively controls the quality of the synthetic data. For low trimming values ($\rho = 1\%$), \texttt{BH~(synth)} conservatively controls the FDR at level $\alpha$. In this regime, \texttt{SynthBH} also remains conservative, yet notably achieves higher power than both \texttt{BH~(real)} and \texttt{BH~(synth)}.
\begin{figure}[!h]
    \centering
    \includegraphics[width=0.35\linewidth, valign=c]{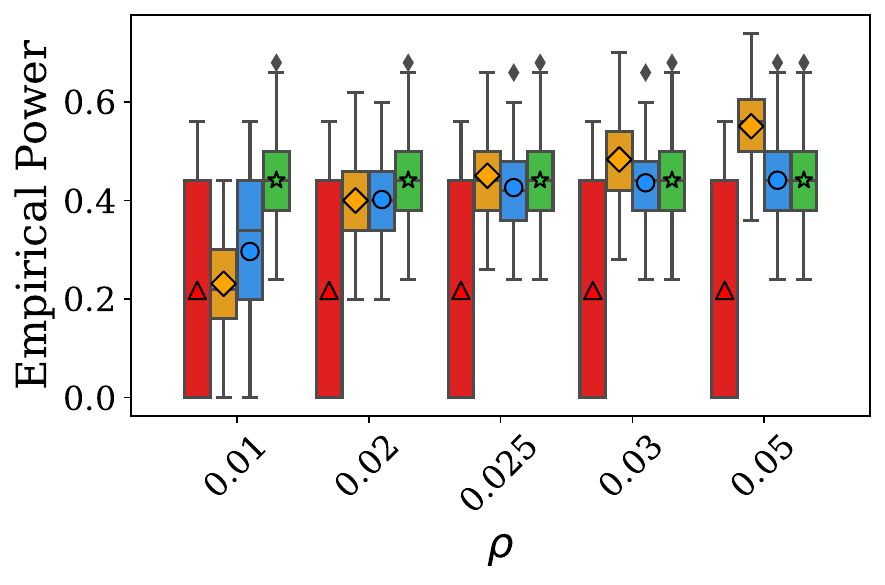}
    \includegraphics[width=0.35\linewidth, valign=c]{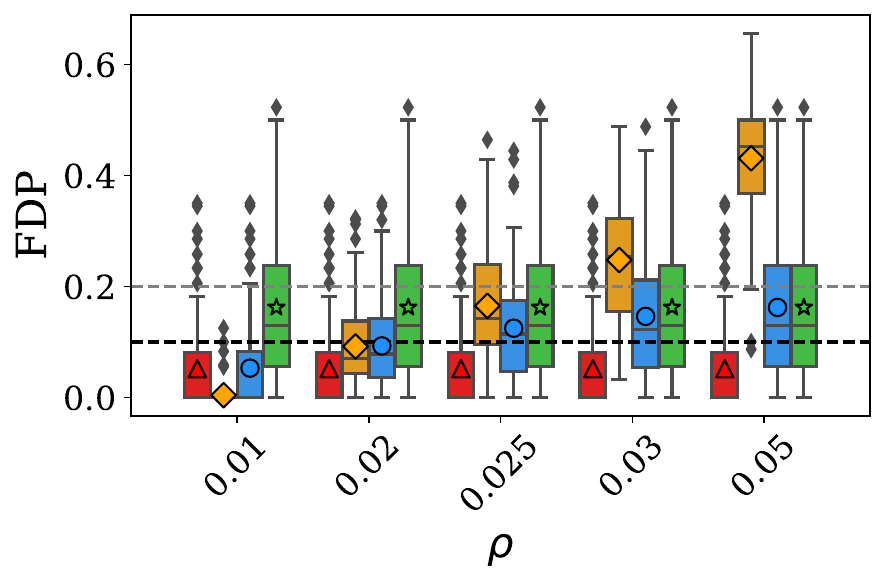}
    \includegraphics[width=0.15\linewidth, valign=c]{figures/experiments/new_outlier_detection/legend_box.pdf}
    \caption{Comparison of conformal outlier detection methods on the \textit{Shuttle} dataset as a function of the trimming proportion $\rho$. Left: detection rate (empirical power). Right: false discovery proportion.}
    \label{app-fig:od-p_trim}
\end{figure}
As $\rho$ increases, \texttt{SynthBH} becomes less conservative, achieving higher power than \texttt{BH~(real)} while achieving FDR close to the nominal level $\alpha$. For high trimming values, \texttt{BH~(synth)} exhibits uncontrolled FDR levels, whereas \texttt{SynthBH} continues to control the FDR at level $\alpha+\eps$, as guaranteed by~\Cref{thm:outlier_prds}. These results illustrate that \texttt{SynthBH} adapts to the unknown quality of the synthetic data, leveraging them when informative and remaining safe when they are not.

\Cref{app-fig:od-fdr-power-e-0.05} shows the same experiment as \Cref{fig:od-fdr-power}, but with a lower value of $\eps=5\%$. The results follow the same trend: when the synthetic data are informative, \texttt{SynthBH} potentially achieve higher power than \texttt{BH~(real)} at the same empirical FDR. When the synthetic data are uninformative, \texttt{SynthBH} matches the performance of \texttt{BH~(real)}. The potential improvement depends on $\eps$, with larger values of $\eps$ generally allowing for greater power gains.

\begin{figure}[!h]
    \centering
    \begin{subfigure}[t]{0.28\textwidth}
    \includegraphics[width=\linewidth, valign=t]{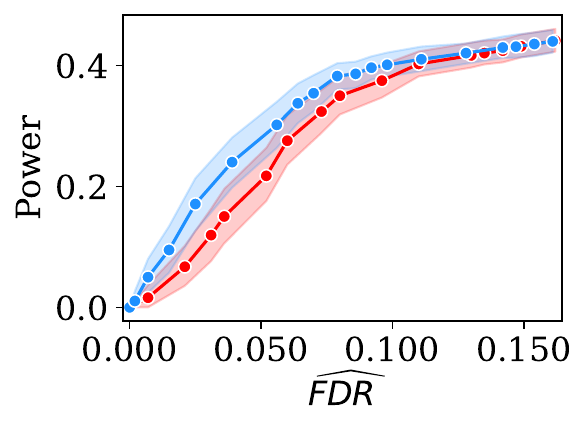}
    \includegraphics[width=\linewidth, valign=t]{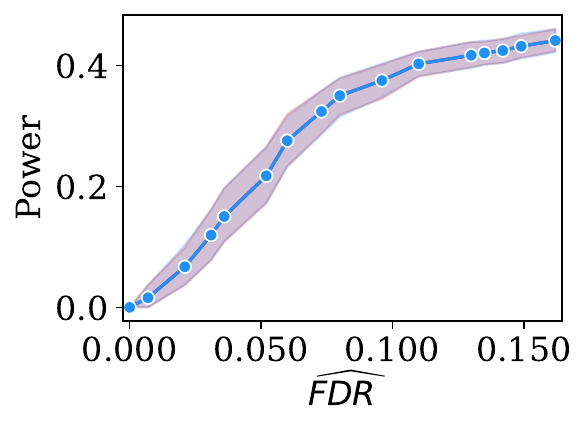}
    \caption{Shuttle}
    \end{subfigure}
    \begin{subfigure}[t]{0.28\textwidth}
    \includegraphics[width=\linewidth, valign=t]{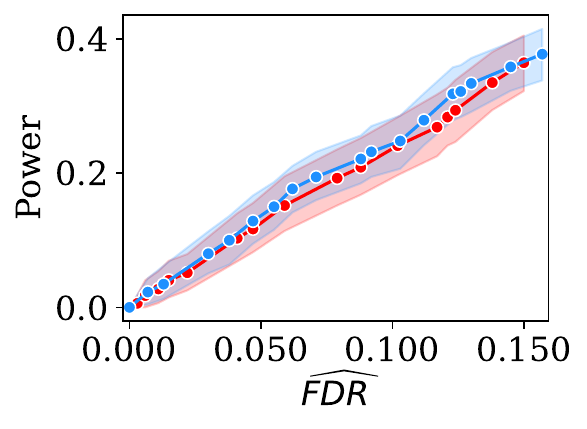}
    \includegraphics[width=\linewidth, valign=t]{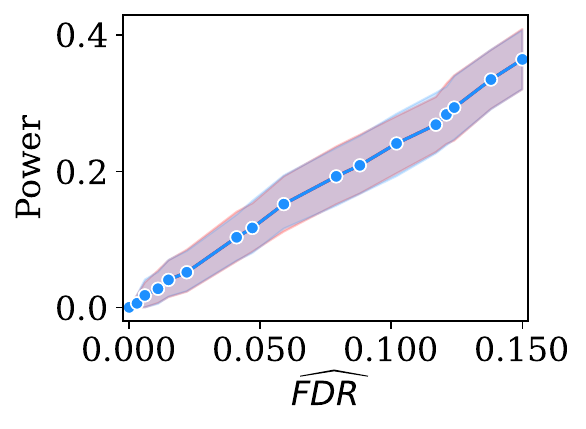}
    \caption{Credit-card}
    \end{subfigure}
    \begin{subfigure}[t]{0.28\textwidth}
    \includegraphics[width=\linewidth, valign=t]{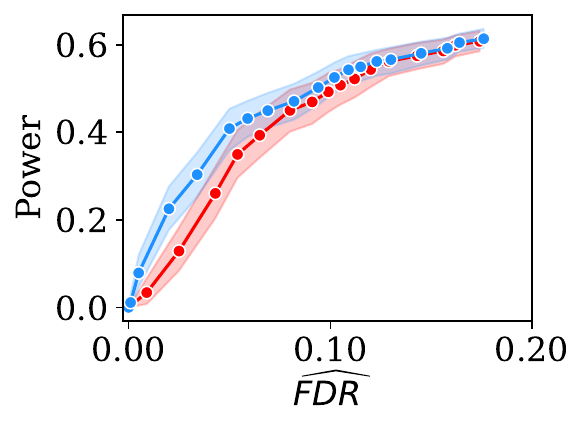}
    \includegraphics[width=\linewidth, valign=t]{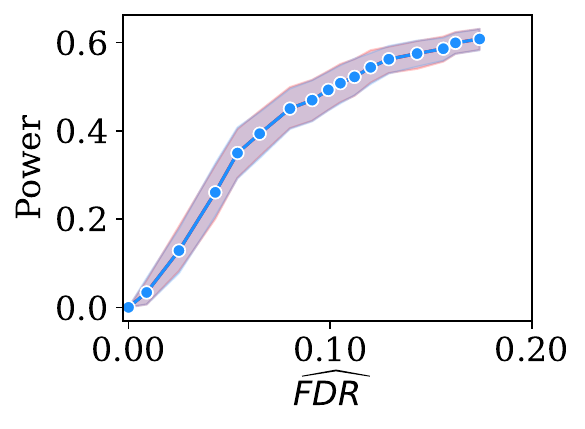}
    \caption{KDDCup99}
    \end{subfigure}
    \includegraphics[width=0.13\linewidth, valign=t]{figures/experiments/new_outlier_detection/legend-roc.pdf}
    \caption{Power as a function of the empirical FDR for \texttt{BH~(real)} and our proposed method \texttt{SynthBH} across different outlier detection datasets: (a) Shuttle, (b) Credit-card, and (c) KDDCup99. The first row corresponds to a trimming proportion of 2\%, and the second row corresponds to a trimming proportion of 0\%, representing a scenario where synthetic data are not useful. Here, $\alpha\in [0,20\%]$. Results are shown for $\eps=5\%$.}
    \label{app-fig:od-fdr-power-e-0.05}
\end{figure}

\subsection{Genomics of Drug Sensitivity in Cancer}\label{app-sec:gdsc-exp}

\Cref{app-fig:gdsc} presents additional results on the GDSC genomic data with $\eps=5\%$. The trends are consistent with those observed in the main manuscript: \texttt{SynthBH} obtains more rejections than \texttt{BH~(real)} while maintaining high average ground-truth scores, notably larger than those of \texttt{BH~(real$+\eps$)}. \texttt{BH~(synth)} shows the same performance as in the main, as it does not depend on $\eps$. Compared to the results with $\eps=10\%$ (\Cref{fig:gdsc}), \texttt{SynthBH} produces slightly fewer rejections, illustrating that the gain in power depends on the value of $\eps$.

\begin{figure}[!h]
    \centering
    \includegraphics[height=0.17\linewidth]{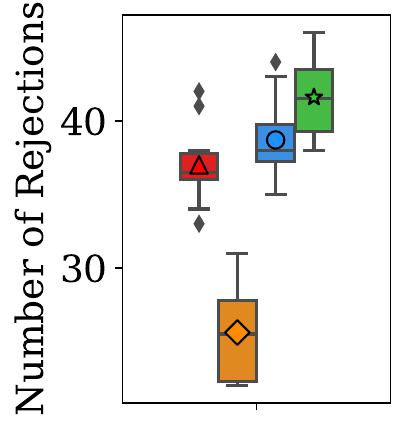}
    \includegraphics[height=0.17\linewidth]{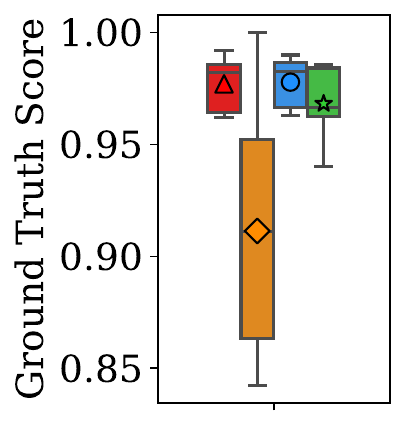}
    \includegraphics[height=0.17\linewidth]{figures/experiments/gdsc/legend.pdf}
    \caption{Comparison of multiple testing methods on GDSC genomic data for 100 tissue-feature-drug hypotheses. Left: number of rejections. Right: average ground-truth score of the rejection set. Results are shown for $\eps=5\%$.}
    \label{app-fig:gdsc}
\end{figure}

\clearpage

\section{Computational Complexity and Efficient Implementation}\label{app-sec:complexity}

\paragraph{Naive implementation cost.}
As written, \Cref{alg:spbh} iterates over $k=1,\ldots,m$ and (i) recomputes the rank-adaptive synthetic-powered p-values $(\tilde p_j^{k\eps/m})_{j\in[m]}$ and (ii) finds the $k$-th order statistic at each iteration.
A direct implementation therefore performs $\Theta(m)$ work per $k$, leading to $\Theta(m^2)$ time in the worst case (and $\mathcal{O}(m)$ memory), which can be prohibitive in large-scale settings.

\paragraph{A static reduction to BH.}
Despite the rank-adaptive definition, \texttt{SynthBH} admits an exact reduction to a single standard BH run on a static set of modified p-values.

Fix $\alpha\in(0,1)$ and $\eps\ge 0$, and define the constant $c:=\alpha/(\alpha+\eps)\in(0,1]$.
For each hypothesis $j\in[m]$, define the static modified p-value
\begin{equation}\label{eqn:static_v}
v_j := p_j \wedge (\tilde p_j \vee (c p_j)).
\end{equation}

\begin{lemma}[Static reduction of \texttt{SynthBH}]\label{lem:static_reduction}
For every $j\in[m]$ and every $k\in[m]$, the step-$k$ acceptance condition in \Cref{alg:spbh} is equivalent to a static BH condition:
$\tilde p_j^{k\eps/m}\le \alpha k/m$ holds if and only if $v_j\le \alpha k/m$.
Consequently, the rejection index $k^*$ defined in \eqref{eqn:rej_num} satisfies
$k^*=\max\{k\in[m]: v_{(k)}\le \alpha k/m\}$, where $v_{(k)}$ is the $k$-th smallest element of $(v_j)_{j\in[m]}$.
Moreover, the rejection set returned by \Cref{alg:spbh} coincides with the BH rejection set at level $\alpha$ applied to $(v_j)_{j\in[m]}$ (with the standard tie convention of rejecting all $j$ such that $v_j\le v_{(k^*)}$).
\end{lemma}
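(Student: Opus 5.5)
The plan is to prove everything from a single pointwise equivalence, and then pass to order statistics by counting. Fix $k\in[m]$ and put $t:=\alpha k/m$ and $\delta:=k\eps/m$, so that $\delta=(\eps/\alpha)\,t$. By the definition \eqref{eqn:spp},
\[
\tilde p_j^{\delta}\le t \iff p_j\le t \ \text{ or } \ \bigl(\tilde p_j\le t \text{ and } p_j-\delta\le t\bigr).
\]
The key observation is that
\[
p_j-\delta\le t \iff p_j\le t(1+\eps/\alpha)=t(\alpha+\eps)/\alpha \iff c\,p_j\le t .
\]
Likewise, from \eqref{eqn:static_v},
\[
v_j\le t \iff p_j\le t \ \text{ or } \ \bigl(\tilde p_j\le t \text{ and } c\,p_j\le t\bigr).
\]
The two conditions coincide, which gives the first claim. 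The only point to watch is that the rank-adaptive shift $k\eps/m$ is proportional to the threshold $\alpha k/m$; this proportionality is exactly what lets the multiplicative factor $c$ absorb it.

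Next, I will use the elementary fact that for any reals $x_1,\ldots,x_m$ and any $t$, $x_{(k)}\le t$ holds if and only if $\#\{j: x_j\le t\}\ge k$. Applying this to $x_j=\tilde p_j^{k\eps/m}$ and to $x_j=v_j$, both with $t=\alpha k/m$, the pointwise equivalence shows that the two counts are equal. Hence $\tilde p_{(k)}^{k\eps/m}\le \alpha k/m$ if and only if $v_{(k)}\le \alpha k/m$, for every $k$. Taking the maximum over $k$ yields the stated formula for $k^*$, and the convention $k^*=0$ agrees on both sides.

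For the rejection sets, assume $k^*\ge1$ and write $t^*=\alpha k^*/m$. I first show that $\#\{j: v_j\le t^*\}=k^*$ exactly. The count is at least $k^*$. If it equalled some $k'>k^*$, then $v_{(k')}\le t^*\le \alpha k'/m$, which contradicts the maximality of $k^*$. Since exactly $k^*$ values lie at or below $t^*$, they are the $k^*$ smallest, and $v_{(k^*)}$ is their maximum. Therefore
\[
\{j: v_j\le v_{(k^*)}\}=\{j: v_j\le t^*\}.
\]
By the pointwise equivalence at step $k^*$, the same count argument applies to $x_j=\tilde p_j^{k^*\eps/m}$, giving
\[
\{j: \tilde p_j^{k^*\eps/m}\le \tilde p_{(k^*)}^{k^*\eps/m}\}=\{j: \tilde p_j^{k^*\eps/m}\le t^*\}.
\]
By the pointwise equivalence, the right-hand sets of these two displays are equal. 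The mild subtlety here is ties: both rules reject all indices tied at the cutoff, and the argument must show that such ties cannot push the set beyond the $k^*$ indices lying below the threshold. The maximality argument handles this.
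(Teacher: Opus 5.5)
Your proof is correct and follows the paper's route: the same pointwise equivalence $\tilde p_j^{k\eps/m}\le \alpha k/m \iff v_j\le \alpha k/m$, obtained by rewriting $p_j-\delta\le t$ as $c\,p_j\le t$ using the proportionality $\delta=(\eps/\alpha)t$. Your counting characterization of order statistics, together with the maximality argument showing that exactly $k^*$ values lie at or below $\alpha k^*/m$, spells out what the paper compresses into ``hence the equality of $k^*$'' and ``up to ties.'' It also shows that the two rejection sets coincide exactly, ties included.
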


\begin{proof}
Fix $j$ and write $p:=p_j$ and $\tilde p:=\tilde p_j$.
Let $t:=\alpha k/m$ and $\delta:=k\eps/m$, so $\delta=(\eps/\alpha)t$.
By definition, $\tilde p_j^\delta=p\wedge(\tilde p\vee(p-\delta))$.
Thus $\tilde p_j^\delta\le t$ holds if and only if either $p\le t$, or simultaneously $\tilde p\le t$ and $p-\delta\le t$.
The inequality $p-\delta\le t$ is equivalent to $p\le t+\delta=t(1+\eps/\alpha)=(\alpha+\eps)t/\alpha$, which is in turn equivalent to $cp\le t$ for $c=\alpha/(\alpha+\eps)$.
Hence $\tilde p_j^\delta\le t$ is equivalent to $p\le t$ or $(\tilde p\le t \text{ and } cp\le t)$, which is exactly the condition $p\wedge(\tilde p\vee(cp))\le t$, i.e., $v_j\le t$.
Applying this equivalence at $t=\alpha k/m$ for each $k$ yields the claim about the feasible set of $k$s, 
hence the equality of $k^*$, and the equality of the resulting BH-style rejection sets (up to ties).
\end{proof}

\paragraph{Weighted case.}
For weighted \texttt{SynthBH} (\Cref{alg:spbh_weighted}), define $c_j:=\alpha/(\alpha+w_j\eps)$ and
$v_j^{(w)}:=p_j\wedge(\tilde p_j\vee(c_j p_j))$.
The same argument (with $\delta=k w_j\eps/m$) shows that for every $j,k$,
$\tilde p_j^{k w_j\eps/m}\le \alpha k/m$ holds if and only if $v_j^{(w)}\le \alpha k/m$.
Therefore, \Cref{alg:spbh_weighted} is exactly BH at level $\alpha$ applied once to the static values $(v_j^{(w)})_{j\in[m]}$.

\paragraph{Time and memory complexity.}
Computing $(v_j)$ (or $(v_j^{(w)})$) takes $\mathcal{O}(m)$ time and $\mathcal{O}(m)$ memory.
Running BH on these values requires sorting, which takes $\mathcal{O}(m\log m)$ time and $\mathcal{O}(m)$ memory, followed by a linear scan to find $k^*$ and construct the rejection set.
Hence, the step-up overhead of \texttt{SynthBH} matches the classic BH procedure: $\mathcal{O}(m\log m)$ time and $\mathcal{O}(m)$ memory.
In large-scale applications, the dominant additional cost is typically the cost of generating the underlying p-values $(p_j)$ and $(\tilde p_j)$ rather than the step-up rule itself.

\clearpage

\section{Proofs}

\subsection{Proof of Lemma~\ref{lem:ext}}
Let $h(x) = \PPst{X \in A}{W=x}$. Then for $x_1 < x_2$,
\begin{align*}
    &\PPst{X \in A}{W \leq x_2} = \frac{\PP{X \in A, W \leq x_2}}{\PP{W \leq x_2}} = \frac{\EE{\PPst{X \in A, W \leq x_2}{W}}}{\PP{W \leq x_2}}\\
    &= \frac{\EE{\PPst{X \in A}{W}\One{W \leq x_2}}}{\PP{W \leq x_2}} = \frac{\EE{h(W)\One{W \leq x_2}}}{\PP{W \leq x_2}}\\
    &= \frac{\EE{h(W)\One{W \leq x_1}} + \EE{h(W) \One{x_1 < W \leq x_2}}}{\PP{W \leq x_2}}\\
    &= \frac{\EEst{h(W)}{W \leq x_1} \PP{W \leq x_1} + \EEst{h(W)}{x_1 < W \leq x_2} \PP{x_1 < W \leq x_2}}{\PP{W \leq x_2}}\\
    &\geq \frac{\EEst{h(W)}{W \leq x_1} \PP{W \leq x_1} + \EEst{h(W)}{W \leq x_1} \PP{x_1 < W \leq x_2}}{\PP{W \leq x_2}}\\
    &= \EEst{h(W)}{W \leq x_1} = \PPst{X \in A}{W \leq x_1},
\end{align*}
where the inequality holds since $h$ is nondecreasing. This completes the proof.

\subsection{Proof of Theorem~\ref{thm:spbh}}

The proof applies ideas presented in the expository work by~\citet{wang2022elementary}. 

Throughout, let $\delta_r := r\eps/m$ for $r\in[m]$.
Recall that the algorithm outputs an index $k^*\in\{0,1,\dots,m\}$ and rejects the set
\[
\mathcal{R}
=
\left\{j\in[m]: \tilde p_j^{\delta_{k^*}} \le \tilde p_{(k^*)}^{\delta_{k^*}}\right\},
\]
with the convention that $k^*=0$ implies $\mathcal{R}=\emptyset$.
(If ties occur, then $|\mathcal{R}|\ge k^*$; this only makes the bound below more conservative since we upper bound by $k^*$.) 
Write $I_0=\{j\in[m]: H_j \text{ is true}\}$ and $m_0=|I_0|$.

\paragraph{Rewriting FDR and comparing to the BH threshold.}
By definition of $\FDR$ and the rejection rule,
\begin{align*}
\FDR 
&= \mathbb{E}\!\left[\frac{|\mathcal{R}\cap I_0|}{|\mathcal{R}|\vee 1}\right]
\;\le\;
\mathbb{E}\!\left[\frac{\sum_{j\in I_0}\One{\tilde p_j^{\delta_{k^*}}\le \tilde p_{(k^*)}^{\delta_{k^*}}}}{k^*\vee 1}\cdot \One{k^*\ge 1}\right],
\end{align*}
where the inequality uses $|\mathcal{R}|\ge k^*$ (hence $1/(|\mathcal{R}|\vee 1)\le 1/(k^*\vee 1)$) and $\mathcal{R}=\emptyset$ when $k^*=0$.

Next, on the event $\{k^*\ge 1\}$, the definition of $k^*$ (\Cref{alg:spbh} / \eqref{eqn:rej_num}) implies
$\tilde p_{(k^*)}^{\delta_{k^*}}\le \alpha k^*/m$.
Therefore, by monotonicity of indicators (if $a\le b$ then $\One{x\le a}\le \One{x\le b}$),
\(
\One{\tilde p_j^{\delta_{k^*}}\le \tilde p_{(k^*)}^{\delta_{k^*}}}
\le
\One{\tilde p_j^{\delta_{k^*}}\le \alpha k^*/m},
\)
and hence
\begin{align*}
\FDR
&\le
\mathbb{E}\!\left[\frac{\sum_{j\in I_0}\One{\tilde p_j^{\delta_{k^*}}\le \alpha k^*/m}}{k^*\vee 1}\cdot \One{k^*\ge 1}\right].
\end{align*}

\paragraph{Conditioning on the value of $k^*$.}
Use the partition $\One{k^*\ge 1}=\sum_{r=1}^m \One{k^*=r}$ and the fact that on $\{k^*=r\}$ we have $\delta_{k^*}=\delta_r$:
\begin{align*}
\FDR
&\le
\mathbb{E}\!\left[\sum_{r=1}^m \frac{1}{r}\Big(\sum_{j\in I_0}\One{\tilde p_j^{\delta_r}\le \alpha r/m}\Big)\One{k^*=r}\right]
\\
&=
\sum_{r=1}^m\sum_{j\in I_0}\frac{1}{r}\mathbb{E}\!\left[\One{\tilde p_j^{\delta_r}\le \alpha r/m}\One{k^*=r}\right]
\qquad\text{(linearity of expectation)}
\\
&=
\sum_{j\in I_0}\sum_{r=1}^m \frac{1}{r}\mathbb{P}\!\left(\tilde p_j^{\delta_r}\le \frac{\alpha r}{m},\, k^*=r\right)
\qquad\text{(since $\mathbb{E}[\One{A}\One{B}]=\mathbb{P}(A\cap B)$)}.
\end{align*}

\paragraph{Using the deterministic guardrail.}
We have, deterministically for all $j$ and $r$,
\(
\tilde p_j^{\delta_r}\ge p_j-\delta_r = p_j-\frac{r\eps}{m}.
\)
Hence, whenever $\tilde p_j^{\delta_r}\le \alpha r/m$ occurs, we must have
$p_j-\frac{r\eps}{m}\le \frac{\alpha r}{m}$, i.e.,
$p_j\le \frac{r(\alpha+\eps)}{m}$.
Equivalently,
\(
\left\{\tilde p_j^{\delta_r}\le \frac{\alpha r}{m}\right\}
\subseteq
\left\{p_j\le \frac{r(\alpha+\eps)}{m}\right\}.
\)
Thus, 
\begin{align*}
\FDR
&\le
\sum_{j\in I_0}\sum_{r=1}^m \frac{1}{r}\mathbb{P}\!\left(p_j\le \frac{r(\alpha+\eps)}{m},\, k^*=r\right).
\end{align*}

\paragraph{Using super-uniformity of null p-values.}
For each $(j,r)$,
\[
\mathbb{P}\!\left(p_j\le \frac{r(\alpha+\eps)}{m},\, k^*=r\right)
=
\mathbb{P}\!\left(p_j\le \frac{r(\alpha+\eps)}{m}\right)\cdot 
\mathbb{P}\!\left(k^*=r \,\middle|\, p_j\le \frac{r(\alpha+\eps)}{m}\right)
\]
(definition of conditional probability).
Since $j\in I_0$, $p_j$ is super-uniform, so for any $t\in[0,1]$,
$\mathbb{P}(p_j\le t)\le t$.
Applying this with $t=r(\alpha+\eps)/m$ yields
\[
\mathbb{P}\!\left(p_j\le \frac{r(\alpha+\eps)}{m}\right)\le \frac{r(\alpha+\eps)}{m}.
\]
Therefore,
\begin{align*}
\FDR
&\le
\sum_{j\in I_0}\sum_{r=1}^m \frac{1}{r}\cdot \frac{r(\alpha+\eps)}{m}\,
\mathbb{P}\!\left(k^*=r \,\middle|\, p_j\le \frac{r(\alpha+\eps)}{m}\right)
\\
&=
\frac{\alpha+\eps}{m}\sum_{j\in I_0}\sum_{r=1}^m 
\mathbb{P}\!\left(k^*=r \,\middle|\, p_j\le \frac{r(\alpha+\eps)}{m}\right).
\end{align*}

\paragraph{PRDS/telescoping bound.}
Fix $j\in I_0$.
Define $x_r:=r(\alpha+\eps)/m$ and
\[
q_{j,r}:=\mathbb{P}\!\left(k^*\le r \,\middle|\, p_j\le x_r\right),
\qquad r\in[m],
\]
and let $[r]_0:=\{0\}\cup[r]$.
For notational convenience, set $q_{j,0}:=0$.
Whenever conditioning events have probability zero (e.g., $\PP(p_j\le x_r)=0$), we interpret conditional probabilities via an arbitrary version (equivalently, set them to $0$); this does not affect the argument since the corresponding unconditional terms vanish.

\emph{Monotonicity of the event $\{k^*\le r\}$.}
View $k^*$ as a function of $\mathbf p=(p_1,\dots,p_m,\tilde p_1,\dots,\tilde p_m)\in[0,1]^{2m}$, i.e., $k^*=k^*(\mathbf p)$.
By Lemma~\ref{lem:k_dec}, $k^*(\mathbf p)$ is coordinatewise nonincreasing: if $\mathbf p\preceq \mathbf q$ then $k^*(\mathbf p)\ge k^*(\mathbf q)$.
Hence the set
\[
B_r:=\{\mathbf p: k^*(\mathbf p)\le r\} = (k^*)^{-1}([r]_0)
\]
is an \emph{increasing} set: if $\mathbf p\in B_r$ and $\mathbf q\succeq \mathbf p$, then $k^*(\mathbf q)\le k^*(\mathbf p)\le r$, so $\mathbf q\in B_r$.

\emph{Applying PRDS.}
By the PRDS assumption (Definition~\ref{def:prds_ext}) with respect to $(p_1,\dots,p_m)$, for each null $j$ and for any increasing set $A$,
the function $x\mapsto \mathbb{P}(\mathbf P\in A\mid p_j\le x)$ is nondecreasing.
Applying this with $A=B_r$ shows that $x\mapsto \mathbb{P}(k^*\le r\mid p_j\le x)$ is nondecreasing.
Since $x_r\le x_{r+1}$, we obtain, for all $r\in[m-1]$,
\[
q_{j,r}=\mathbb{P}(k^*\le r\mid p_j\le x_r)
\le
\mathbb{P}(k^*\le r\mid p_j\le x_{r+1}).
\]
Equivalently, for $r\ge 1$,
\[
\mathbb{P}(k^*\le r-1\mid p_j\le x_r)\ge \mathbb{P}(k^*\le r-1\mid p_j\le x_{r-1}) = q_{j,r-1}.
\]

\emph{Telescoping.}
Using $\{k^*=r\}=\{k^*\le r\}\setminus\{k^*\le r-1\}$,
\begin{align*}
\mathbb{P}\!\left(k^*=r \,\middle|\, p_j\le x_r\right)
&=
\mathbb{P}\!\left(k^*\le r \,\middle|\, p_j\le x_r\right)
-
\mathbb{P}\!\left(k^*\le r-1 \,\middle|\, p_j\le x_r\right)
\\
&\le
\mathbb{P}\!\left(k^*\le r \,\middle|\, p_j\le x_r\right)
-
\mathbb{P}\!\left(k^*\le r-1 \,\middle|\, p_j\le x_{r-1}\right)
\\
&=
q_{j,r}-q_{j,r-1}.
\end{align*}
Summing over $r=1,\dots,m$ and telescoping gives
\[
\sum_{r=1}^m \mathbb{P}\!\left(k^*=r \,\middle|\, p_j\le x_r\right)
\le
\sum_{r=1}^m (q_{j,r}-q_{j,r-1})
=
q_{j,m}-q_{j,0}
\le 1,
\]
since $q_{j,m}\le 1$ and $q_{j,0}\ge 0$.
Applying the last display for each $j\in I_0$ yields
\[
\FDR
\le
\frac{\alpha+\eps}{m}\sum_{j\in I_0} 1
=
\frac{m_0}{m}(\alpha+\eps),
\]
which proves the claim for \Cref{alg:spbh}.

For \Cref{alg:spbh_weighted}, define $\delta_{j,r}:= r w_j\eps/m$ for $r\in[m]$.
Repeating the same steps as in the unweighted case yields
\begin{align*}
\FDR
&\le
\sum_{j\in I_0}\sum_{r=1}^m \frac{1}{r}\mathbb{P}\left(\tilde p_j^{\delta_{j,r}}\le \frac{\alpha r}{m},\, k^*=r\right)
\\
&\le
\sum_{j\in I_0}\sum_{r=1}^m \frac{1}{r}\mathbb{P}\left(p_j\le \frac{r(\alpha+w_j\eps)}{m},\, k^*=r\right),
\end{align*}
where we used the deterministic inequality $\tilde p_j^{\delta_{j,r}}\ge p_j-\delta_{j,r}=p_j-rw_j\eps/m$.

Now fix $j\in I_0$ and write $x_{j,r}:=r(\alpha+w_j\eps)/m$. Since $p_j$ is super-uniform under $H_j$,
$\PP(p_j\le x_{j,r})\le x_{j,r}$, hence
\begin{align*}
\FDR
&\le
\sum_{j\in I_0}\sum_{r=1}^m \frac{1}{r}\cdot \frac{r(\alpha+w_j\eps)}{m}\,
\PPst{k^*=r}{p_j\le x_{j,r}}
\\
&=
\sum_{j\in I_0}\frac{\alpha+w_j\eps}{m}\sum_{r=1}^m \PPst{k^*=r}{p_j\le x_{j,r}}.
\end{align*}

It remains to bound $\sum_{r=1}^m \PPst{k^*=r}{p_j\le x_{j,r}}$.
Define $q_{j,r}:=\PPst{k^*\le r}{p_j\le x_{j,r}}$ for $r\in[m]$, and set $q_{j,0}:=0$.

By Lemma~\ref{lem:k_dec_weighted}, the set $\{k^*\le r\}$ is increasing in $(p_1,\ldots,p_m,\tilde p_1,\ldots,\tilde p_m)$.
By the PRDS-with-respect-to-$(p_1,\ldots,p_m)$ assumption,
$x\mapsto \PP(k^*\le r\mid p_j\le x)$ is nondecreasing.
Since $x_{j,r}$ is nondecreasing in $r$, the same telescoping argument gives
\(
\sum_{r=1}^m \PPst{k^*=r}{p_j\le x_{j,r}} \le 1.
\)
Therefore,
\begin{align*}
\FDR
&\le
\sum_{j\in I_0}\frac{\alpha+w_j\eps}{m}
=
\frac{m_0}{m}\alpha+\frac{\eps}{m}\sum_{j\in I_0}w_j
\le
\frac{m_0}{m}\alpha+\eps,
\end{align*}
where the last inequality uses $\sum_{j=1}^m w_j=m$.

The following lemmas are used in the proof. 
\begin{lemma}\label{lem:k_dec}
    Let $k^* = k^*(\mathbf{p})$ be defined as in~\eqref{eqn:rej_num}, where $\mathbf{p} = (p_1, p_2, \ldots, p_m, \tilde{p}_1, \ldots, \tilde{p}_m) \in [0,1]^{2m}$. Then for any fixed $\mathbf{p}, \mathbf{q} \in [0,1]^{2m}$ such that $\mathbf{p} \preceq \mathbf{q}$, it holds that $k^* (\mathbf{p}) \geq k^*(\mathbf{q})$.
\end{lemma}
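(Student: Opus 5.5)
The plan is to reduce the claim to two elementary monotonicity facts: the synthetic-powered p-value map is coordinatewise nondecreasing in its inputs, and order statistics are coordinatewise nondecreasing. The conclusion then follows from the definition of $k^*$ as a maximum over a feasible set.

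\emph{First step.} Fix $\delta\ge 0$ and consider the map $(p,\tilde p)\mapsto p\wedge(\tilde p\vee(p-\delta))$. It is a composition of $\max$, $\min$, and the translation $p\mapsto p-\delta$, each of which is nondecreasing in every argument. Hence if $p_j\le q_j$ and $\tilde p_j\le \tilde q_j$, then
\[
\tilde p_j^{\delta}(\mathbf p)\le \tilde p_j^{\delta}(\mathbf q)\quad\text{for every } j\in[m] \text{ and every } \delta\ge 0.
\]
In particular this holds for $\delta=k\eps/m$ for each fixed $k$. The key point is that $\delta$ depends only on $k$, not on the data, so $\mathbf p$ and $\mathbf q$ are always compared using the same guardrail.

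\emph{Second step.} Use the standard fact that if $a_j\le b_j$ for all $j$, then $a_{(k)}\le b_{(k)}$ for every $k$. To see this, note that the $k$ indices attaining $b_{(1)},\dots,b_{(k)}$ give $k$ values $a_j\le b_{(k)}$, so at least $k$ of the $a_j$ are $\le b_{(k)}$. Applying this with $a_j=\tilde p_j^{k\eps/m}(\mathbf p)$ and $b_j=\tilde p_j^{k\eps/m}(\mathbf q)$ gives
\[
\tilde p_{(k)}^{k\eps/m}(\mathbf p)\le \tilde p_{(k)}^{k\eps/m}(\mathbf q)\quad\text{for every } k\in[m].
\]

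\emph{Conclusion.} Let $K(\mathbf p)=\{k\in[m]: m\,\tilde p_{(k)}^{k\eps/m}(\mathbf p)/k\le\alpha\}$ denote the feasible set. By the second step, $K(\mathbf q)\subseteq K(\mathbf p)$. Therefore $k^*(\mathbf q)=\max K(\mathbf q)\le \max K(\mathbf p)=k^*(\mathbf p)$, where the convention $\max\emptyset=0$ handles the case of no feasible index.

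None of these steps is a real obstacle. The only point requiring care is the order-statistic comparison in the second step, and it is settled by the short counting argument above.
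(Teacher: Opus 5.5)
Your proposal is correct and follows the paper's proof essentially step for step: coordinatewise monotonicity of $\tilde p_j^{\delta}$ for each fixed $\delta$, monotonicity of the order statistics, and the resulting inclusion of feasible sets $K(\mathbf q)\subseteq K(\mathbf p)$. Your counting argument for the order-statistic comparison fills in a detail that the paper leaves implicit.
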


\begin{proof}[Proof of Lemma~\ref{lem:k_dec}]
    Fix any $\mathbf{p} \preceq \mathbf{q}$, and write $\mathbf{p} = (p_1, p_2, \ldots, p_m, \tilde{p}_1, \ldots, \tilde{p}_m)$ and $\mathbf{q} = (q_1, q_2, \ldots, q_m, \tilde{q}_1, \ldots, \tilde{q}_m)$.
    It suffices to show that
    \begin{equation}\label{eqn:sub}
        \left\{k \in [m] : \frac{m \cdot \tilde{q}_{(k)}^{k\eps/m}}{k} \leq \alpha\right\} \subset \left\{k \in [m] : \frac{m \cdot \tilde{p}_{(k)}^{k\eps/m}}{k} \leq \alpha\right\},
    \end{equation}
    where $\tilde{q}_j^\delta$ and $\tilde{q}_{(k)}^{k\eps/m}$ are defined analogously to $\tilde{p}_j^\delta$ and $\tilde{p}_{(k)}^{k\eps/m}$, respectively.

    Observe that, by the definition of $\tilde{p}_j^\delta$ and $\tilde{q}_j^\delta$, $\mathbf{p} \preceq \mathbf{q}$ implies $\tilde{p}_j^\delta \leq \tilde{q}_j^\delta$ for any $j \in [m]$ and $\delta \ge 0$, which in turn implies $\tilde{p}_{(k)}^\delta \leq \tilde{q}_{(k)}^\delta$ for any $k \in [m]$ and $\delta \ge 0$. This proves the subset relation~\eqref{eqn:sub}.

\end{proof}

\begin{lemma}\label{lem:k_dec_weighted}
Fix weights $(w_j)_{j\in[m]}$ and let $k^*_w = k^*_w(\mathbf p)$ be the rejection index produced by \Cref{alg:spbh_weighted}, viewed as a function of $\mathbf p=(p_1,\ldots,p_m,\tilde p_1,\ldots,\tilde p_m)\in[0,1]^{2m}$.
Then $k^*_w$ is coordinatewise nonincreasing: if $\mathbf p\preceq \mathbf q$, then $k^*_w(\mathbf p)\ge k^*_w(\mathbf q)$.
\end{lemma}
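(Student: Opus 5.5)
The plan is to mirror the proof of Lemma~\ref{lem:k_dec}, with the single shift $k\eps/m$ replaced by the hypothesis-specific shift $kw_j\eps/m$. Fix $\mathbf p\preceq\mathbf q$ in $[0,1]^{2m}$, with coordinates $(p_1,\ldots,p_m,\tilde p_1,\ldots,\tilde p_m)$ and $(q_1,\ldots,q_m,\tilde q_1,\ldots,\tilde q_m)$. It suffices to prove the inclusion of feasible index sets
\[
\bigl\{k\in[m]: \tilde q_{(k)}^{k,\eps,w}\le \alpha k/m\bigr\}\subseteq\bigl\{k\in[m]: \tilde p_{(k)}^{k,\eps,w}\le \alpha k/m\bigr\}.
\]
Taking maxima then gives $k^*_w(\mathbf q)\le k^*_w(\mathbf p)$. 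If the left-hand set is empty, then $k^*_w(\mathbf q)=0$ and the conclusion is immediate.

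The first step is coordinatewise monotonicity of the guarded values for each fixed $k$ and $j$. Set $\delta_j:=kw_j\eps/m\ge 0$; this is a deterministic number, since the weights are fixed and do not depend on the data. The map
\[
(a,b)\mapsto a\wedge\bigl(b\vee(a-\delta_j)\bigr)
\]
is nondecreasing in each argument, because it is built from $\min$, $\max$, and the increasing map $a\mapsto a-\delta_j$. Hence $p_j\le q_j$ and $\tilde p_j\le\tilde q_j$ together imply $\tilde p_j^{\delta_j}\le \tilde q_j^{\delta_j}$ for every $j$.

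The second step passes to order statistics. Fix $k$ and compare the two vectors $(\tilde p_j^{kw_j\eps/m})_{j\in[m]}$ and $(\tilde q_j^{kw_j\eps/m})_{j\in[m]}$. The first is dominated coordinatewise by the second, and the $k$-th smallest element is a coordinatewise nondecreasing function of a vector. So $\tilde p_{(k)}^{k,\eps,w}\le\tilde q_{(k)}^{k,\eps,w}$, and the set inclusion follows.

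I expect no real obstacle. The one point to check is that the shifts $\delta_j$ vary with $j$ but not with $\mathbf p$, so the comparison at each fixed $k$ is between two vectors transformed by identical monotone maps. That is exactly what makes the argument of Lemma~\ref{lem:k_dec} go through unchanged.
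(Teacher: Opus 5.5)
Your proposal is correct and follows the paper's own proof. Both arguments use coordinatewise monotonicity of $p_j\wedge(\tilde p_j\vee(p_j-kw_j\eps/m))$ in $(p_j,\tilde p_j)$, then monotonicity of the $k$-th order statistic, to get inclusion of the feasible sets of $k$ and hence $k^*_w(\mathbf p)\ge k^*_w(\mathbf q)$; your note that the shifts $kw_j\eps/m$ are fixed and data-independent is what lets the unweighted argument carry over.
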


\begin{proof}
Fix $\mathbf p\preceq \mathbf q$. For each $k\in[m]$ and each $j\in[m]$, define
$\tilde p_{j}^{(k)} := p_j \wedge (\tilde p_j \vee (p_j - k w_j\eps/m))$
and define $\tilde q_{j}^{(k)}$ analogously from $\mathbf q$.
Because $\min(\cdot,\cdot)$ and $\max(\cdot,\cdot)$ are coordinatewise nondecreasing, $\mathbf p\preceq \mathbf q$ implies $\tilde p_{j}^{(k)}\le \tilde q_{j}^{(k)}$ for all $j$, hence the $k$-th order statistics satisfy $\tilde p_{(k)}^{(k)}\le \tilde q_{(k)}^{(k)}$.
Therefore any $k$ satisfying $\tilde q_{(k)}^{(k)}\le \alpha k/m$ also satisfies $\tilde p_{(k)}^{(k)}\le \alpha k/m$, implying the feasible set of $k$'s for $\mathbf q$ is contained in that for $\mathbf p$ and hence $k^*_w(\mathbf p)\ge k^*_w(\mathbf q)$.
\end{proof}

\subsection{Proof of Theorem~\ref{thm:outlier_prds}}
The proof applies 
ideas developed in~\citet{bates2023testing}. Without loss of generality, it is enough to assume $1 \in I_0$ and show that for any increasing set $A \subset [0,1]^{2m}$,
\[\PPst{(p_1, p_2, \ldots, p_m, \tilde{p}_1, \ldots, \tilde{p}_m) \in A}{p_1 = t}\]
is a nondecreasing function of $t$. 

For simplicity, let us write $S_i = s(X_i)$ for $i \in [n+m]$ and $\tilde{S}_i = s(\tilde{X}_i)$ for $i \in [N]$. Observe that $p_1 = \frac{R}{n+1}$, where
\[R = \sum_{i=1}^n \One{S_i \geq S_{n+1}} + 1 = \sum_{i=1}^n \One{S_{(i)} \geq S_{n+1}} + 1\]
denotes the rank of $S_{n+1}$ among $(S_1, \ldots, S_n, S_{n+1})$, and $S_{(1)} > S_{(2)} > \ldots > S_{(n)}$ denote the order statistics of $S_1, \ldots, S_n$ in a decreasing order.
Therefore, it suffices to show that for any $1 \leq r_1 < r_2 \leq n+1$,
\[\PPst{(p_1, p_2, \ldots, p_m, \tilde{p}_1, \ldots, \tilde{p}_m) \in A}{R = r_1} \leq \PPst{(p_1, p_2, \ldots, p_m, \tilde{p}_1, \ldots, \tilde{p}_m) \in A}{R = r_2}.\]
Let $\bar{S}_{(1)} > \bar{S}_{(2)} > \ldots > \bar{S}_{(n)} > \bar{S}_{(n+1)}$ be the order statistics of $S_1,\ldots, S_n, S_{n+1}$ in a decreasing order. 
Then, by the definition of $R$, we have
\begin{equation}\label{eqn:s_ordered}
    (S_{(1)}, S_{(2)}, \ldots, S_{(n)}) = (\bar{S}_{(1)}, \ldots, \bar{S}_{(R-1)}, \bar{S}_{(R+1)}, \ldots, \bar{S}_{(n+1)}).
\end{equation}
Now fix any $r\in[n+1]$. Define a rank-$r$ hypothetical vector of p-values as follows.
Set
$p_1^r := r/(n+1)$ and
\[
\tilde p_1^r
:=
\frac{r+\sum_{i=1}^N \One{\tilde S_i \ge \bar S_{(r)}}}{n+N+1},
\]
and for each $j=2,\ldots,m$, define
\[
p_j^r
:=
\frac{1+\sum_{i=1,\,i\neq r}^{n+1}\One{\bar S_{(i)} \ge S_{n+j}}}{n+1},
\qquad
\tilde p_j^r
:=
\frac{1+\sum_{i=1,\,i\neq r}^{n+1}\One{\bar S_{(i)} \ge S_{n+j}}
+\sum_{i=1}^N \One{\tilde S_i \ge S_{n+j}}}{n+N+1}.
\]
By construction, on the event $\{R=r\}$ we have $(p_1,\ldots,p_m,\tilde p_1,\ldots,\tilde p_m)=(p_1^r,\ldots,p_m^r,\tilde p_1^r,\ldots,\tilde p_m^r)$.
Observe also that $p_j^{r_1} \leq p_j^{r_2}$ and $\tilde{p}_j^{r_1} \leq \tilde{p}_j^{r_2}$ hold for any $r_1 < r_2$, by construction. 
Indeed, for $j\ge 2$ the only dependence on $r$ in $p_j^r$ (and $\tilde p_j^r$) is through omitting $\bar S_{(r)}$ from the sum over $\bar S_{(i)}$.
Since $r_1<r_2$ implies $\bar S_{(r_1)}>\bar S_{(r_2)}$, removing $\bar S_{(r_1)}$ can only decrease (relative to removing $\bar S_{(r_2)}$) the count of indices $i$ for which $\bar S_{(i)}\ge S_{n+j}$, hence $p_j^{r_1}\le p_j^{r_2}$ and similarly for $\tilde p_j^r$.
Therefore,
\begin{align*}
    &\PPst{(p_1, p_2, \ldots, p_m, \tilde{p}_1, \ldots, \tilde{p}_m) \in A}{R = r_1} = \PPst{(p_1^R, p_2^R, \ldots, p_m^R, \tilde{p}_1^R, \ldots, \tilde{p}_m^R) \in A}{R = r_1}\\
    &= \PPst{(r_1/(n+1), p_2^{r_1}, \ldots, p_m^{r_1}, \tilde{p}_1^{r_1}, \ldots, \tilde{p}_m^{r_1}) \in A}{R = r_1} = \PP{(r_1/(n+1), p_2^{r_1}, \ldots, p_m^{r_1}, \tilde{p}_1^{r_1}, \ldots, \tilde{p}_m^{r_1}) \in A},
\end{align*}
where the last equality holds since for each fixed $r$ the vector $(p_1^r,\ldots,p_m^r,\tilde p_1^r,\ldots,\tilde p_m^r)$ is a measurable function of $(\bar S_{(1)},\ldots,\bar S_{(n+1)})$, $(\tilde S_1,\ldots,\tilde S_N)$, and $(S_{n+2},\ldots,S_{n+m})$, all of which are independent of $R$ under $H_1$ and the stated independence assumptions.
Indeed, under $H_1$ and the distinctness assumption, the rank $R$ of $S_{n+1}$ among $(S_1,\ldots,S_n,S_{n+1})$ is uniform on $[n+1]$ and independent of the order statistics $(\bar S_{(1)},\ldots,\bar S_{(n+1)})$.
 Moreover,
 since $A$ is an increasing set, and
 due to the inequalities observed above,
 this is upper bounded by 
\begin{align*}
    & \PP{(r_2/(n+1), p_2^{r_2}, \ldots, p_m^{r_2}, \tilde{p}_1^{r_2}, \ldots, \tilde{p}_m^{r_2}) \in A} = \PPst{(p_1, p_2, \ldots, p_m, \tilde{p}_1, \ldots, \tilde{p}_m) \in A}{R = r_2},
\end{align*}
Thus, by Lemma \ref{lem:ext}, we conclude that
the PRDS property from Definition \ref{def:prds_ext} holds.
By Theorem \ref{thm:spbh},
this finishes the proof. 

 \clearpage
 
\section{Additional Experimental Details}
\subsection{Outlier Detection Experimental Details}\label{app-sec:datasets}
\paragraph{Experimental setup and metrics.} Unless stated otherwise, we use $n=500$ inlier samples $\Dn$, and $N=2,500$ contaminated samples $\tDn$, containing 5\% outliers, with trimming proportion of $\rho=2\%$.
The outlier detection model is an Isolation Forest \citep{liu2008isolation} with 100 estimators, trained on $5,000$ samples ($2,500$ for the Credit-card dataset) with the same contamination rate as $\tDn$. 
The test set contains $1,000$ 
datapoints with 5\% outliers. The target FDR level is $\alpha=10\%$, and $\eps=10\%$.
All sets ($\Dn$, $\tDn$, the training set, and the test set) are disjoint.
We report the detection rate and the false discovery proportion over 100 random splits of the data.

\paragraph{Dataset details.}
We consider three tabular benchmark datasets, accessed via the \texttt{openML} Python package \citep{OpenML2021,OpenML2021p}. Below, we summarize the key characteristics of each dataset, including the total number of samples, the number of outliers, and the number of features.

\begin{itemize}
    \item \textbf{Shuttle:} This dataset involves the task of classifying space shuttle flight data, where a subset of classes are considered outliers. It contains $58{,}000$ samples, with $12{,}414$ labeled as outliers, and has $9$ features.
    
    \item \textbf{KDDCup99:} Used for network intrusion detection, we focus on detecting \emph{probe} attacks. The dataset contains $101{,}384$ samples, of which $4{,}107$ correspond to probe attacks (outliers), including the types \texttt{Satan}, \texttt{Ipsweep}, \texttt{Nmap}, and \texttt{Portsweep}. The data has $41$ features.
    
    \item \textbf{Credit Card Fraud:} This dataset consists of European credit card transactions, with the task of identifying fraudulent activity. It contains $284,807$ samples, including $492$ fraudulent transactions (treated as outliers), with $29$ features.
\end{itemize}

\subsection{Genomics of Drug Sensitivity in Cancer}\label{app-sec:gdsc-details}
We provide the full list of hypotheses tested in the GDSC experiments. As described in \Cref{sec:exp-gdsc}, each hypothesis corresponds to a specific tissue–feature–drug triple. 
In total, we test 100 hypotheses. The tested tissue–feature–drug triples are listed below.

\begin{longtable}{lll}
\caption{List of tissue-feature-drug hypotheses tested in the GDSC experiments. BRCA denotes breast invasive carcinoma and LUAD denotes lung adenocarcinoma.}\label{tab:gdsct-hypotheses} \\
\toprule
 Tissue & Genomic feature & Drug \\
\midrule
\endfirsthead
\toprule
 Tissue & Genomic feature & Drug \\
\midrule
\endhead
\midrule
\multicolumn{3}{r}{Continued on next page}
\endfoot
\bottomrule
\endlastfoot

BRCA	& gain\_cnaPANCAN301\_(CDK12,ERBB2,MED24)	& Ibrutinib \\
BRCA	& PIK3CA\_mut	& Taselisib \\
BRCA	& gain\_cnaPANCAN301\_(CDK12,ERBB2,MED24)	& Sapitinib \\
BRCA	& gain\_cnaPANCAN301\_(CDK12,ERBB2,MED24)	& Osimertinib \\
BRCA	& PIK3CA\_mut	& Alpelisib \\
BRCA	& gain\_cnaPANCAN301\_(CDK12,ERBB2,MED24)	& Afatinib \\
BRCA	& gain\_cnaPANCAN301\_(CDK12,ERBB2,MED24)	& Lapatinib \\
BRCA	& gain\_cnaPANCAN87	& Cediranib \\
BRCA	& PTEN\_mut	& AZD8186 \\
BRCA	& loss\_cnaPANCAN93	& VTP-B \\
BRCA	& loss\_cnaPANCAN94	& VTP-B \\
BRCA	& loss\_cnaPANCAN95	& VTP-B \\
BRCA	& loss\_cnaPANCAN96	& VTP-B \\
BRCA	& PTEN\_mut	& Afuresertib \\
LUAD	& loss\_cnaPANCAN92	& Daporinad \\
BRCA	& gain\_cnaPANCAN301\_(CDK12,ERBB2,MED24)	& Alpelisib \\
BRCA	& PTEN\_mut	& Ipatasertib \\
BRCA	& gain\_cnaPANCAN87	& PFI-1 \\
LUAD	& STK11\_mut	& Nutlin-3a (-) \\
BRCA	& TP53\_mut	& Nutlin-3a (-) \\
LUAD	& loss\_cnaPANCAN96	& Sapitinib \\
BRCA	& gain\_cnaPANCAN88\_(PABPC1,UBR5)	& Cediranib \\
BRCA	& PTEN\_mut	& AZD5363 \\
BRCA	& loss\_cnaPANCAN210\_(FAT1,IRF2)	& Uprosertib \\
BRCA	& gain\_cnaPANCAN87	& Foretinib \\
BRCA	& gain\_cnaPANCAN367\_(ARFGAP1,GNAS)	& Tozasertib \\
LUAD	& loss\_cnaPANCAN96	& AZD3759 \\
BRCA	& gain\_cnaPANCAN367\_(ARFGAP1,GNAS)	& Pyridostatin \\
BRCA	& gain\_cnaPANCAN272\_(PIP5K1A,SETDB1)	& Navitoclax \\
BRCA	& gain\_cnaPANCAN280\_(ARID4B,FH)	& BMS-754807 \\
BRCA	& gain\_cnaPANCAN88\_(PABPC1,UBR5)	& Vinorelbine \\
BRCA	& gain\_cnaPANCAN363\_(ASXL1)	& Sapitinib \\
BRCA	& gain\_cnaPANCAN177\_(IL7R)	& Eg5\_9814 \\
BRCA	& gain\_cnaPANCAN87	& Paclitaxel \\
BRCA	& loss\_cnaPANCAN210\_(FAT1,IRF2)	& PF-4708671 \\
LUAD	& loss\_cnaPANCAN96	& Erlotinib \\
BRCA	& gain\_cnaPANCAN88\_(PABPC1,UBR5)	& Paclitaxel \\
LUAD	& gain\_cnaPANCAN164\_(KRAS)	& TAF1\_5496 \\
LUAD	& loss\_cnaPANCAN96	& Gefitinib \\
LUAD	& STK11\_mut	& Ipatasertib \\
LUAD	& TP53\_mut	& ERK\_2440 \\
BRCA	& gain\_cnaPANCAN302\_(CLTC,PPM1D)	& AZD7762 \\
LUAD	& gain\_cnaPANCAN164\_(KRAS)	& OF-1 \\
BRCA	& gain\_cnaPANCAN280\_(ARID4B,FH)	& MK-1775 \\
BRCA	& gain\_cnaPANCAN124\_(EGFR)	& Gefitinib \\
LUAD	& loss\_cnaPANCAN94	& Sapitinib \\
LUAD	& loss\_cnaPANCAN95	& Sapitinib \\
LUAD	& gain\_cnaPANCAN164\_(KRAS)	& LGK974 \\
LUAD	& gain\_cnaPANCAN91\_(MYC)	& KRAS (G12C) Inhibitor-12 \\
LUAD	& loss\_cnaPANCAN96	& Ibrutinib \\
LUAD	& loss\_cnaPANCAN337	& Veliparib \\
LUAD	& loss\_cnaPANCAN337	& Serdemetan \\
LUAD	& loss\_cnaPANCAN337	& Sapitinib \\
LUAD	& loss\_cnaPANCAN337	& Lenalidomide \\
LUAD	& loss\_cnaPANCAN337	& Avagacestat \\
LUAD	& loss\_cnaPANCAN337	& VE821 \\
LUAD	& loss\_cnaPANCAN337	& Vismodegib \\
LUAD	& loss\_cnaPANCAN338	& CCT007093 \\
LUAD	& loss\_cnaPANCAN337	& GNE-317 \\
LUAD	& loss\_cnaPANCAN337	& BMS-754807 \\
LUAD	& loss\_cnaPANCAN337	& Motesanib \\
LUAD	& loss\_cnaPANCAN337	& JQ1 \\
LUAD	& loss\_cnaPANCAN337	& Tanespimycin \\
LUAD	& loss\_cnaPANCAN337	& Bosutinib \\
LUAD	& loss\_cnaPANCAN337	& BI-2536 \\
LUAD	& loss\_cnaPANCAN337	& Rucaparib \\
LUAD	& loss\_cnaPANCAN338	& Pyridostatin \\
LUAD	& loss\_cnaPANCAN337	& Temsirolimus \\
LUAD	& loss\_cnaPANCAN338	& AGK2 \\
LUAD	& loss\_cnaPANCAN338	& Olaparib \\
LUAD	& loss\_cnaPANCAN338	& AMG-319 \\
LUAD	& loss\_cnaPANCAN338	& PD0325901 \\
LUAD	& loss\_cnaPANCAN337	& RO-3306 \\
LUAD	& loss\_cnaPANCAN338	& AZD4547 \\
LUAD	& loss\_cnaPANCAN338	& Crizotinib \\
LUAD	& loss\_cnaPANCAN337	& PD173074 \\
LUAD	& loss\_cnaPANCAN337	& Fulvestrant \\
LUAD	& loss\_cnaPANCAN338	& Luminespib \\
LUAD	& loss\_cnaPANCAN338	& SB216763 \\
LUAD	& loss\_cnaPANCAN338	& Ribociclib \\
LUAD	& loss\_cnaPANCAN337	& MK-1775 \\
LUAD	& loss\_cnaPANCAN337	& Doramapimod \\
LUAD	& loss\_cnaPANCAN337	& Dactinomycin \\
LUAD	& loss\_cnaPANCAN337	& Afatinib \\
LUAD	& loss\_cnaPANCAN337	& Talazoparib \\
LUAD	& loss\_cnaPANCAN337	& Crizotinib \\
LUAD	& loss\_cnaPANCAN337	& Buparlisib \\
LUAD	& loss\_cnaPANCAN338	& LJI308 \\
LUAD	& loss\_cnaPANCAN337	& Dasatinib \\
LUAD	& loss\_cnaPANCAN338	& Ulixertinib \\
LUAD	& loss\_cnaPANCAN338	& Ipatasertib \\
LUAD	& loss\_cnaPANCAN338	& Cediranib \\
LUAD	& loss\_cnaPANCAN338	& 5-Fluorouracil \\
LUAD	& loss\_cnaPANCAN337	& LJI308 \\
LUAD	& loss\_cnaPANCAN338	& Vinorelbine \\
LUAD	& loss\_cnaPANCAN337	& Gefitinib \\
LUAD	& loss\_cnaPANCAN338	& Osimertinib \\
LUAD	& loss\_cnaPANCAN338	& RO-3306 \\
LUAD	& loss\_cnaPANCAN337	& Bortezomib \\
LUAD	& loss\_cnaPANCAN337	& 5-Fluorouracil \\
\end{longtable}

\end{document}